\newenvironment{proof}{{\bf Proof:  }}{}
\newenvironment{proofof}[1]{{\bf Proof of #1:  }}{\hfill \qed}
\numberwithin{figure}{section}
\numberwithin{equation}{section}
\newtheorem{theorem}{Theorem}[section]
\newtheorem{definition}[theorem]{Definition}
\newtheorem{lemma}[theorem]{Lemma}
\newtheorem{proposition}{\hskip\parindent Proposition}[section]
\newcommand{\qed}{\hfill$\Box$}
\newcommand{\defeq}{:=}
\newcommand{\R}{\mathbb{R}}
\newcommand{\E}{\mathbb{E}}
\newcommand{\W}{\ensuremath{\mathsf{W}}\xspace}
\newcommand{\Wm}{\ensuremath{\mathsf{W}^{-1}}\xspace}
\newcommand{\Wh}{\ensuremath{\mathsf{W}^{\frac{1}{2}}}\xspace}
\newcommand{\Wmh}{\ensuremath{\mathsf{W}^{-\frac{1}{2}}}\xspace}
\newcommand{\Lo}{\ensuremath{\mathsf{L}\xspace}}
\newcommand{\Lc}{\ensuremath{\mathcal{L}}\xspace}
\newcommand{\D}{\ensuremath{\mathsf{D}\xspace}}
\newcommand{\Dc}{\ensuremath{\mathcal{D}}\xspace}
\newcommand{\Q}{\ensuremath{\mathsf{Q}\xspace}}
\newcommand{\I}{\ensuremath{\mathsf{I}_n\xspace}}
\newcommand{\w}[1]{\ensuremath{w_{#1}}}
\newcommand{\f}[1]{\ensuremath{f_{#1}}}
\newcommand{\ray}{\ensuremath{\mathsf{R}}\xspace}
\newcommand{\rayc}{\ensuremath{\mathcal{R}}\xspace}
\newcommand{\vp}{\varphi}
\newcommand{\T}{\ensuremath{\mathsf{T}}\xspace}
\newcommand{\ra}{\rightarrow}
\newcommand{\spn}{\operatorname{span}}
\newcommand{\eps}{\epsilon}
\newcounter{note}[section]
\newcommand{\initOneLiners}{%
    \setlength{\itemsep}{0pt}
    \setlength{\parsep }{0pt}
    \setlength{\topsep }{0pt}
}
\newcommand{\ignore}[1]{}
\newcommand*\samethanks[1][\value{footnote}]{\footnotemark[#1]}
\newcommand{\shortv}[1]{}
\newcommand{\set}[1]{\left\{#1\right\}}
\newcommand{\Abs}[1]{\left\lvert#1\right\rvert}
\newcommand{\argmax}{{\sf argmax}}
\newcommand{\argmin}{{\sf argmin}}
\newcommand{\supp}{{\sf supp}}
\newcommand{\one}{\ensuremath{\mathbf{1}}} 
\newcommand{\zero}{\ensuremath{\mathbf{0}}}
\newcommand{\eset}{\emptyset}
\newtheorem{theorem*}{Theorem}
\newcommand{\savehyperref}[2]{\texorpdfstring{\hyperref[#1]{#2}}{#2}}
\title{Generalizing the Hypergraph Laplacian via
a Diffusion Process with Mediators}
\author{
T-H. Hubert Chan\thanks{Department of Computer Science, the University of Hong Kong. \texttt{hubert@cs.hku.hk, liangzb@connect.hku.hk} }
\and Zhibin Liang\samethanks[1]
}
\date{}
\begin{document}

\begin{titlepage}

\maketitle

\begin{abstract}

In a recent breakthrough STOC~2015 paper, a continuous diffusion process was considered on hypergraphs (which has been refined in a recent JACM 2018 paper) to define a Laplacian operator, whose spectral properties satisfy the celebrated Cheeger's inequality. However, one peculiar aspect of this diffusion process is that each hyperedge directs flow only from vertices with the maximum density to those with the minimum density, while ignoring vertices having strict in-beween densities.

In this work, we consider a generalized diffusion process, in which vertices in a hyperedge can act as mediators to receive flow from vertices with maximum density and deliver flow to those with minimum density.  We show that the resulting Laplacian operator still has a second eigenvalue satsifying the Cheeger's inequality.

Our generalized diffusion model shows that there is a family of operators whose spectral properties are related to hypergraph conductance, and provides a powerful tool to enhance the development of spectral hypergraph theory. Moreover, since every vertex can participate in the new diffusion model at every instant, this can potentially have wider practical applications.

\end{abstract}

\thispagestyle{empty}
\end{titlepage}

\section{Introduction}
\label{sec:intro}

Spectral graph theory, and specifically, the well-known Cheeger's inequality
give a relationship between the edge expansion properties of a graph
and the eigenvalues of some appropriately defined matrix~\cite{alon1986eigenvalues,alon1985lambda1}.
Loosely speaking,
for a given graph, its edge expansion or \emph{conductance} gives a lower bound
on the ratio of the number of edges leaving a subset~$S$ of vertices to the sum of vertex degrees in $S$.
It is natural that graph conductance
is studied in the context of
graph partitioning or clustering~\cite{jacm/KannanVV04,colt/MakarychevMV15,PengSZ15}, whose goal
is to minimize the weight of edges crossing different clusters with respect
to intra-cluster edges.
The reader can refer to the standard references~\cite{chung1997spectral,hoory2006expander} for
an introduction to spectral graph theory.

\noindent \textbf{Recent Generalization to Hypergraphs.}
In an edge-weighted hypergraph $H = (V,E,w)$,
an edge $e \in E$ is a non-empty subset of $V$.
The edges have positive weights indicated by 
$w : E \rightarrow \R_+$.
The weight of each vertex $v \in V$ 
is its weighted degree $w_v := \sum_{e \in E: v \in e} w_e$.
A subset $S$ of vertices has
weight $w(S) := \sum_{v \in S} w_v$, and the edges it cuts is $\partial S := \{e \in E : $ $e$
intersects both $S$ and $V \setminus S \}$.

The conductance of $S \subseteq V$
is defined as $\phi(S) := \frac{w(\partial S)}{w(S)}$.  The conductance
of $H$ is defined as:

\begin{equation}
\phi_H := \min_{\eset \subsetneq S \subsetneq V} \max\{\phi(S), \phi(V \setminus S)\}.
\label{eq:hyper_exp}
\end{equation}

Until recently, it was an open problem to define a spectral model for hypergraphs.
In a breakthrough STOC~2015 paper, Louis~\cite{louis2015hypergraph} considered a 
continuous diffusion process
on hypergraphs (which has been refined in a recent JACM paper~\cite{chan2018jacm}),
and defined an operator 
$\Lo_w f := - \frac{df}{dt}$,
where $f \in \R^V$ is some appropriate vector associated with the diffusion process.
As in classical spectral graph theory,
$\Lo_w$ has non-negative eigenvalues, and the all-ones vector $\one$
is an eigenvector with eigenvalue 0.  Moreover, the operator $\Lo_w$
has a second eigenvalue~$\gamma_2$, and the Cheeger's inequality can be recovered\footnote{
In fact, as shown in this work, a stronger upper bound holds:
$\phi_H \leq  \sqrt{2 \gamma_2}$.} for hypergraphs:

\centerline{$
    \frac{\gamma_2}{2} \leq \phi_H \leq 2 \sqrt{\gamma_2}.
$}

\noindent \textbf{Limitation of the Existing Diffusion Model~\cite{chan2018jacm,louis2015hypergraph}.}
Suppose at some instant, each vertex has some \emph{measure} that is given by
a measure vector $\vp \in \R^V$.  A corresponding \emph{density} vector $f \in \R^V$
is defined by $f_u := \frac{\vp_u}{w_u}$, for each $u \in V$.  Then, at this instant,
each edge $e \in E$ will cause measure to flow
from vertices $S_e(f) := \argmax_{s \in e} f_s$
having the maximum density
to vertices $I_e(f) := \argmin_{i \in e} f_i$ having
the minimum density, at a rate of $w_e \cdot \max_{s,i \in e} (f_s - f_i)$.
Observe that there can be more than one vertex achieving the maximum or the minimum
density in an edge, and a vertex can be involved with multiple number of edges.
As shown in~\cite{chan2018jacm}, it is non-trivial to determine
the net rate of incoming measure for each vertex.

One peculiar aspect of this diffusion process is that each edge~$e$
only concerns its vertices having the maximum or the minimum density,
and ignores the vertices having strict in-between densities.
Even though this diffusion process leads to a theoretical treatment
of spectral hypergraph properties, its practical use is somehow limited,
because it would be considered more natural if vertices having intermediate
densities in an edge also take part in the diffusion process.

For instance, in a recent work on semi-supervised learning
on hypergraphs~\cite{ZhangHTC17}, the diffusion operator is used to 
construct an update vector that changes only the solution values of
vertices attaining the maximum or the minimum in hyperedges.
Therefore, we consider the following open problem in this work:

\emph{Is there a diffusion process on hypergraphs that involves all vertices in every edge at every instant
such that the resulting operator still retains desirable spectral properties?}

\subsection{Our Contribution and Results.}

\noindent \textbf{Generalized Diffusion Process with Mediators.} We consider
a diffusion process where for each edge~$e$,
a vertex~$j \in e$ can act as a \emph{mediator} that 
receives flow from vertices in $S_e(f)$ and delivers flow to~$I_e(f)$.
Formally, we denote $[e] := e \cup \{0\}$,
where $0$ is a special index that does not refer to any vertex.
Each edge~$e$ is equipped with non-negative constants
$(\beta^e_j : j \in [e])$ such that $\sum_{j \in [e]} \beta^e_j = 1$.
Intuitively, for $j =0$,  $\beta^e_0$ refers to the
effect of flow going directly from $S_e(f)$ to $I_e(f)$;
for each vertex~$j \in e$, $\beta^e_j$ refers to the significance
of~$j$ as a mediator between $S_e(f)$ and $I_e(f)$.  The complete
description of the diffusion rules is in Definition~\ref{defn:rules}.
Here are some interesting special cases captured by the new diffusion model.

\begin{compactitem}
\item For each $e \in E$, $\beta^e_0 = 1$.  This is the
existing model in~\cite{chan2018jacm,louis2015hypergraph}.

\item For each $e \in E$, there is some $j_e \in e$ such that $\beta^e_{j_e} = 1$,
i.e., each edge has one special vertex that acts as its mediator
who regulates all flow within the edge.

\item For each $e \in E$, for each $j \in e$, $\beta^e_j = \frac{1}{|e|}$, i.e.,
every vertex in an edge are equally important as mediators.
\end{compactitem}

\begin{theorem}[Recovering Cheeger's Inequality via Diffusion Process with Mediators]
\label{th:main}
Given a hypergraph $H = (V, E, w)$
and mediator constants $(\beta^e_j: e \in E, j \in [e])$,
the diffusion process in Definition~\ref{defn:rules}
defines an operator $\Lo_w f := - \frac{df}{dt}$
that has a second eigenvalue~$\gamma_2$ satisfying
$
\frac{\gamma_2}{2} \leq \phi_H \leq 2 \sqrt{ \gamma_2}
$, where $\phi_H$ is the hypergraph conductance defined in~(\ref{eq:hyper_exp}).
\end{theorem}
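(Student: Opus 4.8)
There are two largely separate tasks. The first is to make sense of Definition~\ref{defn:rules}: to show that a density vector $f$ determines a unique rate of change $\frac{df}{dt}$, so that $\Lo_w f := -\frac{df}{dt}$ is a well-defined operator (piecewise linear, not linear), that the ODE $\frac{df}{dt} = -\Lo_w f$ has a unique solution, and that $\one$ is an eigenvector with eigenvalue $0$. The second is to introduce the discrepancy ratio $\mathcal{D}_w(f) := \frac{\langle f,\Lo_w f\rangle_w}{\langle f,f\rangle_w}$, where $\langle f,g\rangle_w := \sum_{v} w_v f_v g_v$, to identify $\gamma_2$ with $\min\{\mathcal{D}_w(f) : f\neq 0,\ \sum_v w_v f_v = 0\}$ and its minimizer with an eigenvector of $\Lo_w$, and then to prove the two halves of Cheeger's inequality for $\mathcal{D}_w$.

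\textbf{Well-definedness of $\Lo_w$ (the main obstacle).} This is the technical core and generalizes the construction of~\cite{chan2018jacm}. Given $f$, the rules fix, for every edge $e$, the total flow through $e$ and --- via the constants $(\beta^e_j)$ --- its split among the direct route $S_e(f)\to I_e(f)$ and the mediated routes $S_e(f)\to j\to I_e(f)$ for $j\in e$; when several vertices tie for maximum or minimum density in $e$, or a vertex lies in many edges, the net rate of incoming measure at a vertex is underdetermined. Following~\cite{chan2018jacm}, I would select, among all measure-flows consistent with the rules, the one minimizing $\lVert\frac{d\vp}{dt}\rVert_w^2$ --- the optimum of a convex program, unique in the coordinates that matter --- and then show the induced map $f\mapsto\frac{df}{dt}$, though discontinuous, still yields a unique ODE solution (via the same fixed-point / piecewise-Lipschitz argument as in~\cite{chan2018jacm}). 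The output of this step that I need downstream is the first-variation identity
\begin{equation*}
\langle f,\Lo_w f\rangle_w \;=\; \sum_{e\in E} w_e\left[\beta^e_0(\Delta_e)^2 \;+\; \sum_{j\in e}\beta^e_j\Big((f^{\max}_e - f_j)^2 + (f_j - f^{\min}_e)^2\Big)\right],
\end{equation*}
where $f^{\max}_e := \max_{u\in e}f_u$, $f^{\min}_e := \min_{u\in e}f_u$, $\Delta_e := f^{\max}_e - f^{\min}_e$; one checks the right-hand side does not depend on which consistent flow was chosen, so the tie-breaking is irrelevant here. Verifying that the minimum-norm flow really realizes the prescribed rates and that the resulting bilinear form collapses to the displayed sum is, I expect, the most delicate point.

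\textbf{Discrepancy ratio and its spectral meaning.} The identity above makes $\mathcal{D}_w(f)$ an explicit quotient, and its numerator is controlled by an elementary inequality: for $a\le t\le b$, $\tfrac12(b-a)^2 \le (b-t)^2 + (t-a)^2 \le (b-a)^2$, while the lone term $(b-a)^2$ also lies in $[\tfrac12(b-a)^2,(b-a)^2]$. Since $\sum_{j\in[e]}\beta^e_j = 1$ and every $f_j$ with $j\in e$ satisfies $f^{\min}_e\le f_j\le f^{\max}_e$, each edge's bracket lies in $[\tfrac12(\Delta_e)^2, (\Delta_e)^2]$, whence
\begin{equation*}
\tfrac12\,\mathcal{D}^{\mathrm{orig}}(f) \;\le\; \mathcal{D}_w(f) \;\le\; \mathcal{D}^{\mathrm{orig}}(f), \qquad \mathcal{D}^{\mathrm{orig}}(f) := \frac{\sum_{e} w_e(\Delta_e)^2}{\sum_v w_v f_v^2},
\end{equation*}
with $\mathcal{D}^{\mathrm{orig}}$ the discrepancy ratio of~\cite{chan2018jacm,louis2015hypergraph}. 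Setting $\gamma_2 := \min\{\mathcal{D}_w(f): \sum_v w_v f_v = 0,\ f\neq 0\}$ (attained, since $\mathcal{D}_w$ is continuous and the relevant unit sphere is compact), a first-order optimality argument for this ratio of a piecewise-quadratic form and $\lVert\cdot\rVert_w^2$ --- exactly as used in~\cite{chan2018jacm} --- shows the minimizer $f^\star$ satisfies $\Lo_w f^\star \in \gamma_2 f^\star + \spn(\one)$; the $\one$-component vanishes because $\langle \one,\Lo_w f^\star\rangle_w = -\tfrac{d}{dt}\sum_v \vp_v = 0$ (conservation of total measure) and $\langle \one, f^\star\rangle_w = 0$, so $\Lo_w f^\star = \gamma_2 f^\star$ and $\gamma_2$ is the second eigenvalue.

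\textbf{The two Cheeger bounds.} For $\gamma_2 \le 2\phi_H$: pick $S$ attaining the minimum in~(\ref{eq:hyper_exp}) and set $f_v = \tfrac1{w(S)}$ on $S$, $f_v = -\tfrac1{w(V\setminus S)}$ on $V\setminus S$, so $\sum_v w_v f_v = 0$; only edges of $\partial S$ contribute, each with $\Delta_e = \tfrac1{w(S)}+\tfrac1{w(V\setminus S)} =: \Delta$ and bracket $\le\Delta^2$, and $\sum_v w_v f_v^2 = \Delta$, so $\gamma_2 \le \mathcal{D}_w(f) \le \Delta\, w(\partial S) = \phi(S) + \phi(V\setminus S) \le 2\phi_H$. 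For $\phi_H \le 2\sqrt{\gamma_2}$: apply the sandwich to $f^\star$ to get $\mathcal{D}^{\mathrm{orig}}(f^\star) \le 2\gamma_2$, then run the hypergraph Cheeger rounding of~\cite{chan2018jacm} (threshold/sweep on $f^\star$ together with a Cauchy--Schwarz bound on $\sum_e w_e(\Delta_e)^2$), which yields a set with $\max\{\phi(S),\phi(V\setminus S)\}\le\sqrt{2\,\mathcal{D}^{\mathrm{orig}}(f^\star)}\le\sqrt{4\gamma_2} = 2\sqrt{\gamma_2}$; a rounding applied directly to the mediator numerator (using that each bracket is $\ge\tfrac12(\Delta_e)^2$) recovers the sharper $\phi_H\le\sqrt{2\gamma_2}$ of the footnote. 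Together the two bounds give the theorem.
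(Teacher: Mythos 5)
Your proposal is correct and follows essentially the same route as the paper: establish $\langle f,\Lo_w f\rangle_w=\Q(f)$ (the paper's Lemma~\ref{lemma:ray_disc}), sandwich each edge term to get $\Q(f)\le\Q^0(f)\le 2\Q(f)$ and hence $\gamma_2\le\gamma^0_2\le 2\gamma_2$ (Lemma~\ref{lemma:cheeger}), invoke the sharpened rounding $\phi_H\le\sqrt{2\gamma^0_2}$, and show the constrained minimizer of the discrepancy ratio is an eigenvector (Theorem~\ref{th:hyper_lap}). The genuine differences are minor: (i) you define $\frac{df}{dt}$ as the minimum-norm flow consistent with the rules, whereas the paper characterizes it via a recursive densest-subset procedure (Fig.~\ref{fig:define_r}, Lemma~\ref{lemma:define_lap}); both are viable, but either way the identity $\sum_e c^I_e r_I(e)-\sum_e c^S_e r_S(e)+\sum_j c_j r_j=\|r\|_w^2$ must be verified for the mediator terms, since it is exactly what makes $\frac{d}{dt}\langle f,\Lo_w f\rangle_w=-2\|\Lo_w f\|_w^2$ hold; (ii) your direct test-vector proof of $\gamma_2\le 2\phi_H$ is a clean substitute for the paper's appeal to the known easy direction for $\gamma^0_2$.

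Two cautions. First, calling the eigenvalue step a ``first-order optimality argument'' undersells it: $\Q$ is only piecewise smooth, so a static stationarity condition certifies only that \emph{some} subgradient at $f^\star$ lies in $\spn(\W f^\star)+\spn(\W\one)$, whereas you need this for the \emph{specific} selection $\Lo_w f^\star$. The paper (following \cite{chan2018jacm}) obtains it dynamically: the diffusion keeps $x\perp\Wh\one$ and makes $\rayc$ non-increasing, with $\frac{d\rayc}{dt}=0$ \emph{iff} $\Lc x\in\spn(x)$ (Lemmas~\ref{lemma:deriv} and~\ref{lemma:lap_proj}), so minimality of $x_2$ forces $\Lc x_2=\gamma_2 x_2$; this is precisely where the derivative identity from (i) is consumed, so it cannot be skipped. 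Second, the $\sqrt{2\Q^0(f)/\|f\|_w^2}$ rounding guarantee you attribute to \cite{chan2018jacm} is actually the improvement proved in Appendix~\ref{sec:up_bound_cheeger} of this paper; the weaker $2\sqrt{\gamma^0_2}$ of the prior work would give only $2\sqrt{2\gamma_2}$ and miss the claimed constant, so the sharper bound (or your suggested direct rounding of the mediator numerator) is genuinely needed.
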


\noindent \textbf{Impacts of New Diffusion Model.}  Our generalized
diffusion model shows that there is a family of operators whose spectral properties
are related to hypergraph conductance.  On the theoretical aspect,
this provides a powerful tool to enhance the development of
spectral hypergraph theory.

On the practical aspect, as mentioned earlier,
in the context of semi-supervised learning~\cite{hein2013total,ZhangHTC17},
the following minimization convex program is considered:
the objective function  is $\Q(f) := \langle f, \Lo_w f \rangle_w$,
and the $f$ values of labeled vertices are fixed.
For an iterative method to solve the convex program,
our new diffusion model can possibly lead to an update vector
that modifies every coordinate in the current solution, thereby potentially improving
the performance of the solver.

\subsection{Related Work}

\noindent \emph{Other Works on Diffusion Process and Spectral Graph Theory.}
Apart from the most related aforementioned works~\cite{chan2018jacm,louis2015hypergraph}
that we have already mentioned,
similar diffusion models (without mediators) have been considered for directed normal graphs~\cite{yoshida2016nonlinear}
and directed hypergraphs~\cite{CTWZ2017} to define operators whose spectral properties
are analyzed.

\medskip
\noindent \emph{Higher-Order Cheeger Inequalities.}
For normal graphs,
Cheeger-like inequalities 
to relate higher-order spectral properties
with multi-way edge expansion have been investigated~\cite{KwokLL16,kwok2013improved,lee2014multiway,louis2014approximation,louis2011algorithmic,louis2012many}.
On the other hand, for hypergraphs,
the higher-order spectral properties of the diffusion operator are still unknown.
However, Cheeger-like inequalities
have been derived in terms of the discrepancy ratio~\cite{chan2018jacm},
but not related to the spectral properties of the diffusion operator.

\section{Preliminaries}
\label{sec:hyper-notation}

We consider an edge-weighted hypergraph $H = (V,E,w)$.
Without loss of generality, we assume that 
the weight $w_i := \sum_{e \in E: i \in e} w_e$
of each vertex~$i \in V$ is positive,
since any vertex with zero weight can be removed.
We use $\W \in \R^{V \times V}$ to denote the diagonal matrix whose
$(i,i)$-th entry is the vertex weight~$w_i$;
we let $\I$ denote the identity matrix.

We use $\R^V$ to denote the set of column vectors.
Given $f \in \R^V$,
we use $f_u$ or $f(u)$ 
to indicate the coordinate
corresponding to $u \in V$.
We use $A^\T$ to denote the transpose of a matrix $A$.

We use $\one \in \R^V$ to denote the vector having $1$ in every coordinate. 
For a vector $x \in \R^V$, we define its support as the set of coordinates at which $x$ is non-zero, i.e.
$\supp(x) \defeq \set{i : x_i \neq 0}$.

We use $\chi_S \in \{0,1\}^V$ to denote the indicator vector of the set $S \subset V$, i.e.,
$\chi_S(v) = 1 $ \emph{iff} $v \in S$.

Recall that the conductance $\phi_H$ of a hypergraph $H$ 
is defined in (\ref{eq:hyper_exp}).
We drop the subscript whenever the
hypergraph is clear from the context.

\noindent \textbf{Generalized Quadratic Form.} For each edge $e \in E$,
we denote $[e] := e \cup \{0\}$,
where $0$ is a special index that does not correspond to any vertex.
Then, each edge~$e$ is associated with non-negative constants $(\beta^e_j: j \in [e])$
such that $\sum_{j \in [e]} \beta^e_j = 1$.
The \emph{generalized quadratic form} is defined for each $f \in \R^V$ as:

\begin{equation}
\textstyle \Q(f) := \sum_{e\in E}\w{e}\{
\beta^e_0\max_{s,i\in e}\left(\f{s}-\f{i}\right)^2+\sum_{j\in e}\beta^e_{j}[(\max_{s\in e}\f{s}-\f{j})^2
+(\f{j}-\min_{i\in e}\f{i})^2 ]
\}.
\label{eq:quad}
\end{equation}

\noindent For each non-zero $f \in \R^V$, its \emph{discrepancy ratio} is defined as $\D_w(f) := \frac{\Q(f)}{\sum_{u \in V} \w{u} \f{u}^2}$.

\noindent \emph{Remark.}  Observe that for each $S \subseteq V$, the corresponding
indicator vector $\chi(S) \in \{0,1\}^V$ satisfies $\Q(\chi(S)) = w(\partial S)$.
Hence, we have $\D_w(\chi(S)) = \phi(S)$.

\noindent \textbf{Special Case.} We denote
$\Q^0(f) := \sum_{e \in E} w_e \max_{s,i\in e}\left(\f{s}-\f{i}\right)^2$
for the case when $\beta^e_0 =1$ for all $e$, which was considered in~\cite{chan2018jacm}.
As we shall see later, for $j \in e$,
the weight $\beta^e_j$ denotes the significance
of vertex~$j$ as a ``mediator'' in the diffusion process to direct measure from vertices
of maximum density to those with minimum density.
As in~\cite{chan2018jacm}, we consider three isomorphic spaces as follows.

\noindent \textbf{Density Space.} This is the space
associated with the quadratic form $\Q$.
For $f,g \in \R^V$, the inner product
is defined as $\langle f, g \rangle_w := f^\T \W g$,
and the associated norm is $\| f \|_w := \sqrt{\langle f, f \rangle_w}$.  We use $f \perp_w g$ to denote $\langle f, g \rangle_w = 0$.

\noindent \textbf{Normalized Space.} 
Given $f \in \R^V$ in the density space,
the corresponding vector in the normalized space is $x := \Wh f$.
The normalized discrepancy ratio is
$\Dc(x) := \D_w(\Wmh x) = \D_w(f)$.

In the normalized space, the usual $\ell_2$ inner product and norm are used.
Observe that if $x$ and $y$ are the corresponding
normalized vectors for $f$ and $g$ in the density space,
then $\langle x, y \rangle = \langle f, g \rangle_w$.

\noindent \emph{Towards Cheeger's Inquality.}
Using the inequality $a^2 + b^2 \leq (a+b)^2 \leq 2(a^2 + b^2)$
for non-negative $a$ and $b$,
we conclude  that $\Q(f) \leq \Q^0(f) \leq 2 \Q(f)$ for all $f \in \R^V$.
This immediately gives a partial result of Theorem~\ref{th:main}.

\begin{lemma}[Cheeger's Inequality for Quadratic Form]
\label{lemma:cheeger}
Suppose $\gamma_2 := \min_{\zero \neq f \perp_w \one} \frac{\Q(f)}{\|f\|_w^2}$. Then, we have
$\frac{\gamma_2}{2} \leq \phi_H \leq 2 \sqrt{\gamma_2}$,
where $\phi_H$ is the hypergraph conductance defined in (\ref{eq:hyper_exp}).
\end{lemma}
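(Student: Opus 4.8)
The plan is to reduce the statement to the corresponding Cheeger-type inequality for the special quadratic form $\Q^0$, which was already established in~\cite{chan2018jacm}, using only the elementary sandwich $\Q(f) \le \Q^0(f) \le 2\,\Q(f)$ noted just above. Concretely, define $\gamma_2^0 := \min_{\zero \neq f \perp_w \one} \frac{\Q^0(f)}{\|f\|_w^2}$; dividing the sandwich inequality by $\|f\|_w^2$ and minimizing over the same feasible set $\{f : \zero \neq f \perp_w \one\}$ gives $\gamma_2 \le \gamma_2^0 \le 2\gamma_2$. So it suffices to (i) recall/reprove the two directions of Cheeger's inequality for $\Q^0$, namely $\frac{\gamma_2^0}{2} \le \phi_H$ and $\phi_H \le \sqrt{2\gamma_2^0}$, and (ii) track how the constants degrade when passing from $\Q^0$ back to $\Q$.

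For the \emph{easy direction} ($\gamma_2 \le 2\phi_H$), I would exhibit a test vector. Take a set $S$ with $w(S) \le \frac12 w(V)$ achieving $\phi_H$, and consider $f := \chi_S - \frac{w(S)}{w(V)}\one$, which is the projection of $\chi_S$ onto the subspace $\one^{\perp_w}$. The Remark in the Preliminaries gives $\Q(\chi_S) = w(\partial S)$, and since $\Q$ is invariant under adding multiples of $\one$ (all three quantities $\max_{s}\f s - \f j$, $\f j - \min_i \f i$, $\max_s \f s - \min_i \f i$ are translation-invariant), we get $\Q(f) = w(\partial S)$. A short computation shows $\|f\|_w^2 = w(S)\big(1 - \frac{w(S)}{w(V)}\big) \ge \frac12 w(S)$, hence $\gamma_2 \le \frac{\Q(f)}{\|f\|_w^2} \le \frac{2 w(\partial S)}{w(S)} = 2\phi(S) \le 2\phi_H$, giving $\frac{\gamma_2}{2} \le \phi_H$.

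The \emph{hard direction} ($\phi_H \le 2\sqrt{\gamma_2}$) is the real obstacle and is where I would lean on the machinery of~\cite{chan2018jacm}. The standard route is a threshold-rounding (sweep) argument: given the minimizer $f \perp_w \one$, translate and scale so that the positive and negative parts each carry at most half the weight, work with $g := f_+$ (or $f_-$) supported on at most half the weight, and argue that some superlevel set $S_t = \{u : g_u > t\}$ satisfies $\phi(S_t) \le \sqrt{2\,\Dc}$ where $\Dc$ is the discrepancy ratio of $g$. The key inequality needed is of Cauchy--Schwarz type: $\sum_{e} w_e \max_{s,i \in e}|g_s^2 - g_i^2| \le \sqrt{\Q^0(g)} \cdot \sqrt{\sum_e w_e (\max_s g_s + \min_i g_i)^2}$, combined with $\int_0^\infty w(\partial S_t)\,dt = \sum_e w_e \max_{s,i\in e}|g_s - g_i|$ for the $0/1$-valued-on-a-sweep structure, and $\min_t \phi(S_t) \le \frac{\int w(\partial S_t)dt}{\int w(S_t)dt}$. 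Since $\Q(g) \le \Q^0(g) \le 2\Q(g)$, any bound $\phi_H \le \sqrt{2\,\Q^0(g)/\|g\|_w^2}$ obtained this way converts into $\phi_H \le \sqrt{4\,\Q(g)/\|g\|_w^2} = 2\sqrt{\D_w(g)}$; optimizing over the sign choice and using $\D_w(g) \le \D_w(f) = \gamma_2$ yields $\phi_H \le 2\sqrt{\gamma_2}$. I expect the bulk of the work to be in verifying that the sweep argument of~\cite{chan2018jacm} goes through \emph{verbatim} for $\Q^0$ — it does, because $\Q^0$ is exactly their quadratic form — so in fact no new rounding analysis is needed here; the only new ingredient is the two-line comparison $\Q \le \Q^0 \le 2\Q$, which is why the constant in the upper bound is $2\sqrt{\gamma_2}$ rather than the $\sqrt{2\gamma_2}$ achievable for $\Q^0$ alone (as the footnote in the introduction hints).
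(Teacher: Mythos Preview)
Your proposal is correct and follows essentially the same route as the paper: reduce to the known Cheeger inequality for $\Q^0$ via the sandwich $\Q \le \Q^0 \le 2\Q$, deduce $\gamma_2 \le \gamma_2^0 \le 2\gamma_2$, and then invoke $\frac{\gamma_2^0}{2} \le \phi_H \le \sqrt{2\gamma_2^0}$ from~\cite{chan2018jacm} (with the improved upper-bound constant proved in Appendix~\ref{sec:up_bound_cheeger}). Your direct test-vector argument for the easy direction and your sketch of the sweep/threshold rounding for the hard direction are exactly the standard ingredients behind that cited inequality, so no new idea is needed beyond the two-line comparison you identified.
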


\begin{proof}
Denote $\gamma^0_2 := \min_{\zero \neq f \perp_w \one} \frac{\Q^0(f)}{\|f\|_w^2}$.
Then, the result from~\cite{chan2018jacm} and an improved upper bound
in Appendix~\ref{sec:up_bound_cheeger} gives: $\frac{\gamma^0_2}{2} \leq \phi_H \leq \sqrt{2 \gamma^0_2}$.
Finally, $\Q \leq \Q^0 \leq 2 \Q$ implies that $\gamma_2 \leq \gamma^0_2 \leq 2 \gamma_2$.
Hence, the result follows.
\qed
\end{proof}

\noindent \textbf{Goal of This Paper.}
In view of Lemma~\ref{lemma:cheeger},
the most technical part of the paper is to
define an operator\footnote{In
the literature, the weighted Laplacian is actually $\W \Lo_w$ in our notation.  Hence,
to avoid confusion, we restrict the term Laplacian to the normalized space.} 
$\Lo_w: \R^V \ra \R^V$ such that
$\langle f, \Lo_w f \rangle_w = \Q(f)$, and show that 
$\gamma_2$ defined in Lemma~\ref{lemma:cheeger} is indeed an eigenvalue of $\Lo_w$.
To achieve this,
we shall consider a diffusion process in the following measure space.

\noindent \textbf{Measure Space.}  
Given a density vector $f \in \R^V$,
multiplying each coordinate with its corresponding weight
gives the measure vector $\vp := \W f$.  
Observe that a vector in the measure space can have negative coordinates.
We do not consider inner product explicitly in this space, and 
so there is no special notation for it.

\noindent \textbf{Transformation between Different Spaces.}
We use the Roman letter $f$ for vectors in
the density space, $x$ for vectors in the
normalized space, and Greek letter $\vp$
for vectors in the measure space.
Observe that 
an operator defined on one space
induces operators on
the other two spaces.
For instance, if $\Lo$ is an operator defined on the measure space,
then $\Lo_w := \Wm \Lo \W$  is the corresponding
operator on the density space 
and $\Lc := \Wmh \Lo \Wh$ is the one on the normalized space.
Moreover, all three operators have the same eigenvalues.
Recall that the Rayleigh quotients are defined
as $\ray_w(f) := \frac{\langle f, \Lo_w f \rangle_w}{\langle f, f \rangle_w}$
and $\rayc(x) := \frac{\langle x, \Lc x \rangle}{\langle x, x \rangle}$. For $\Wh f = x$, we have $\ray_w(f) = \rayc(x)$.

\section{Diffusion Process with Mediators}
\label{sec:diffusion}

\noindent \textbf{Intuition.}  Given an
edge-weighted hypergraph $H=(V,E,w)$,
suppose at some instant, each vertex has some measure given by the
vector $\vp \in \R^V$, whose corresponding
density vector is $f = \Wm \vp$.  The idea of a diffusion process
is that within each edge~$e \in E$, measure should flow
from vertices with higher densities to those with lower densities,
and the rate of flow has a positive correlation with
the difference in densities and the strength of the edge~$e$ given by $w_e$.
If the diffusion process is well-defined, then an operator
on the density space can be defined as $\Lo_w f := - \frac{df}{dt}$.
This induces the Laplacian operator $\Lc := \Wh \Lo_w \Wmh$
on the normalized space.

In previous work~\cite{chan2018jacm},
within an edge, measure only
flows from vertices $S_e(f) := \argmax_{s \in e} f_s \subseteq e$
having the maximum density to those
$I_e(f) := \argmin_{i \in e} f_i$ having minimum densities,
where the rate of flow is $w_e \cdot \max_{s,i \in e} (f_s - f_i)$.
If all $f_u$'s for an edge $e$ are equal,
then we use the convention that $I_e(f) = S_e(f) = e$.
Note that vertices $j \in e \setminus (S_e(f) \cup I_e(f))$
with strict in-between densities
do not participate due to edge~$e$ at this instant.

\noindent \textbf{Generalized Diffusion Process with
Mediators.} In some applications as mentioned
in Section~\ref{sec:intro}, it might be more natural
if every vertex in an edge~$e$ plays some role
in diverting flow from $S_e(f)$ to $I_e(f)$.  In our new
diffusion model, each edge~$e$ is associated
with constants $(\beta^e_j: j \in [e])$ such that
$\sum_{j \in [e]} \beta^e_j = 1$.

Here, $0$ is a special index and the parameter $\beta^e_0$
corresponds to the significance of measure flowing directly
from $S_e(f)$ to $I_e(f)$.
For $j \in e$, $\beta^e_j$ indicates the significance
of vertex~$j$ as a ``mediator'' to receive measure
from $S_e(f)$ and deliver measure to $I_e(f)$.
The formal rules are given as follows.

\begin{definition}[Rules of Diffusion Process]
\label{defn:rules}
Suppose at some instant the system is in a state
given by the density vector $f \in \R^V$,
with measure vector $\vp =\W f$.  
Then, at this instant,
measure is transferred between vertices according to the following rules.
For $u \in e$ and
$j \in [e]$,	the pair $(e,j)$ imposes some rules on the diffusion process;
let $\vp'_u(e,j)$ be the net rate of measure flowing into vertex~$u$ due to 
the pair~$(e,j)$.

\begin{compactitem}

\item [\textsf{R(0)}] For each vertex $u \in V$, the density changes according to
the net rate of incoming measure divided by its weight:

		\centerline{$
		w_u \frac{d f_u}{d t} = \vp'_u := \sum_{e \in E: u \in e} \sum_{j \in [e]} \vp'_u(e,j).
		$}

\item [\textsf{R(1)}] 

We have $\vp'_u(e,j) < 0$ and $u \neq j$ implies that $u \in S_e(f)$.

Similarly, $\vp'_u(e,j) > 0$ and $u \neq j$ implies that $u \in I_e(f)$.

		\item [\textsf{R(2)}] Each edge $e \in E$ and $j \in [e]$,
		the rates of flow satisfy the following.
		
		For $j = 0$, the rate of flow from $S_e(f)$ to $I_e(f)$ due to $(e,0)$ is:
		
		\centerline{$
		- \sum_{u \in S_e(f)} \vp'_u(e,0)  =  w_e \cdot \beta^e_0 \cdot \max_{s,i \in e}(f_{s}-f_{i})
		= \sum_{u \in I_e(f)} \vp'_u(e,0).
		$}

		For $j \in e$, the rate of flow from $S_e(f)$ to $j$ due to $(e,j)$ is:

		\centerline{$
        - \sum_{u \in S_e(f)} \vp'_u(e,j)  = w_e \cdot \beta^e_j  \cdot (\max_{s \in e} f_{s} -f_{j});
        $}

		the rate of flow from $j$ to $I_e(f)$ due to $(e,j)$ is:
		
		\centerline{$
		\sum_{u \in I_e(f)} \vp'_u(e,j)  = w_e \cdot \beta^e_j  \cdot (f_{j}- \min_{i \in e}f_{i}).
		$}
		
		Then the net rate of flow received by $j$ due to $(e,j)$ is:
		
		\centerline{$
		w_e \cdot \beta^e_j  \cdot (\max_{s \in e} f_s + \min_{i \in e} f_i - 2 f_j)
		=  \vp'_j(e,j).
		$}
		
	\end{compactitem}
\end{definition}

\noindent \textbf{Existence of Diffusion Process.} The diffusion rules
in Definition~\ref{defn:rules} are much more complicated than
those in~\cite{chan2018jacm}.  It is not immediately obvious whether
such a process is well-defined.  However, the techniques in~\cite{CTWZ2017}
can be employed.  Intuitively, by repeatedly applying the procedure
described in Section~\ref{sec:disp}, all higher-order derivatives of the density vector
can be determined, which induce an equivalence relation on~$V$
such that vertices in the same equivalence class will have the same density in infinitesimal time.
This means the hypergraph can be reduced to a simple graph, in which the diffusion process is
known to be well-defined.  However, to argue this formally is non-trivial,
and the reader can refer to the details in~\cite{CTWZ2017}.

As in~\cite{chan2018jacm},
if we define an operator using the diffusion process in Definition~\ref{defn:rules},
then the resulting Rayleigh quotient coincides with the discrepancy ratio.
The proof of the following lemma is deferred to Appendix~\ref{sec:ray_disc}.

\begin{lemma}
[Rayleigh Quotient Coincides with Discrepancy Ratio]
\label{lemma:ray_disc}
Suppose $\Lo_w$ on the density space is defined as
$\Lo_w f := - \frac{df}{dt}$ by
the rules in Definition~\ref{defn:rules}.
Then, 
the Rayleigh quotient associated with $\Lo_w$ satisfies
that for any $f$ in the density space,
$\ray_w(f) = \D_w(f)$.
By considering the isomorphic normalized space,
we have for each $x$, $\rayc(x) = \Dc(x)$.
\end{lemma}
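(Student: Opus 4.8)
The plan is to verify directly that $\langle f, \Lo_w f\rangle_w = \Q(f)$ for every $f$ in the density space; dividing both sides by $\|f\|_w^2$ then yields $\ray_w(f) = \D_w(f)$, and the statement for the normalized space follows from the isomorphism $x = \Wh f$ together with the identity $\ray_w(f) = \rayc(x)$ and $\Dc(x) = \D_w(f)$ recorded in the Preliminaries. So everything reduces to computing $\langle f, \Lo_w f\rangle_w = f^\T \W \Lo_w f = -f^\T \W \frac{df}{dt} = -\sum_{u \in V} w_u f_u \frac{df_u}{dt} = -\sum_{u\in V} f_u \vp'_u$, using rule \textsf{R(0)}. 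Expanding $\vp'_u$ by its definition gives
$-\sum_{u \in V} f_u \vp'_u = -\sum_{e \in E}\sum_{j \in [e]} \sum_{u \in e} f_u\, \vp'_u(e,j)$,
so it suffices to show that for each edge $e$ and each $j \in [e]$, the inner sum $-\sum_{u\in e} f_u\,\vp'_u(e,j)$ equals the corresponding term in \eqref{eq:quad}: namely $w_e \beta^e_0 \max_{s,i\in e}(f_s-f_i)^2$ when $j=0$, and $w_e\beta^e_j[(\max_{s\in e}f_s - f_j)^2 + (f_j - \min_{i\in e}f_i)^2]$ when $j \in e$.

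Each of these is a short calculation using \textsf{R(2)} and the fact that $f$ is constant (equal to $\max_{s\in e} f_s$, resp. $\min_{i\in e} f_i$) on $S_e(f)$, resp. $I_e(f)$. For $j = 0$: split $-\sum_{u\in e} f_u \vp'_u(e,0) = -\sum_{u\in S_e(f)} f_u\vp'_u(e,0) - \sum_{u\in I_e(f)} f_u\vp'_u(e,0)$, pull out the constant density on each of the two vertex sets, and apply the two displayed identities in \textsf{R(2)} for $j=0$; this gives $(\max_{s}f_s)\cdot w_e\beta^e_0\max_{s,i}(f_s-f_i) - (\min_i f_i)\cdot w_e\beta^e_0\max_{s,i}(f_s-f_i) = w_e\beta^e_0\max_{s,i\in e}(f_s - f_i)^2$. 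For $j \in e$: decompose $\vp'_u(e,j)$ into the flow into $S_e(f)$, the net flow at the mediator $j$, and the flow into $I_e(f)$ (here \textsf{R(1)} guarantees that $\vp'_u(e,j)$ is nonzero only for $u \in S_e(f) \cup \{j\} \cup I_e(f)$), and again substitute the three rate expressions from \textsf{R(2)}; after pulling out the constant densities on $S_e(f)$ and $I_e(f)$ one collects $w_e\beta^e_j(\max_s f_s)(\max_s f_s - f_j) - f_j\cdot w_e\beta^e_j(\max_s f_s + \min_i f_i - 2f_j) - w_e\beta^e_j(\min_i f_i)(f_j - \min_i f_i)$, which regroups exactly into $w_e\beta^e_j[(\max_s f_s - f_j)^2 + (f_j - \min_i f_i)^2]$. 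Summing over $j \in [e]$ and then over $e \in E$ reproduces $\Q(f)$.

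The only genuinely delicate point — and the main obstacle — is justifying that $\Lo_w f = -\frac{df}{dt}$ is even well-defined pointwise in time, so that the manipulation $\langle f, \Lo_w f\rangle_w = -\sum_u f_u\vp'_u$ is legitimate; but this is exactly the content of the "Existence of Diffusion Process" discussion preceding the lemma, which reduces the hypergraph locally to a simple graph via the equivalence-class argument of \cite{CTWZ2017}, and we may invoke it. Beyond that, one should be slightly careful that the decomposition of $\vp'_u(e,j)$ for $j\in e$ into its $S_e(f)$-, $\{j\}$-, and $I_e(f)$-components is a partition even when these sets overlap (e.g.\ when all densities in $e$ are equal, in which case $S_e(f) = I_e(f) = e$ by convention and every bracketed term vanishes, consistent with the rates in \textsf{R(2)} being zero); handling the degenerate cases where $j \in S_e(f)$ or $j \in I_e(f)$ is routine since the relevant rates then vanish and the algebraic identity degenerates correctly.
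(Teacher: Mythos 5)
Your proof is correct, and it takes a genuinely more direct route than the paper's. The paper's proof (Appendix~\ref{sec:ray_disc}) first introduces a non-unique symmetric pairwise-interaction matrix $A_f$ whose entries $a^{(e,j)}_{uv}$ distribute each aggregate rate of \textsf{R(2)} among individual vertex pairs, represents the operator as $\Lo_w = \I - \Wm A_f$, and then evaluates $f^\T(\W - A_f)f = \sum_{uv} a_{uv}(f_u - f_v)^2$, grouping by $(e,j)$ and using that $(f_s - f_i)^2$ is constant over $(s,i) \in S_e(f)\times I_e(f)$ (and similarly for the mediator terms) so that only the constrained row sums of $A_f$ matter. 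You bypass the pairwise matrix entirely: from \textsf{R(0)} you get $\langle f, \Lo_w f\rangle_w = -\sum_u f_u \vp'_u$, regroup by $(e,j)$, and exploit the same constancy of $f$ on $S_e(f)$ and $I_e(f)$ to pull the densities out and substitute the aggregate rate identities of \textsf{R(2)} directly; the per-$(e,j)$ algebra you sketch does regroup into the corresponding term of \eqref{eq:quad}. Both arguments hinge on the identical observation --- only the total rates into and out of $S_e(f)$, $I_e(f)$ and the mediator $j$ matter, not how they are split among pairs --- but yours is shorter and never has to argue that the quadratic form is independent of the choice of $A_f$; what the paper's detour buys is the explicit representation $\Lo_w f = (\I - \Wm A_f)f$, which makes the connection to the classical graph Laplacian visible. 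Your treatment of the degenerate cases ($j \in S_e(f)\cup I_e(f)$, or all densities in $e$ equal) and your deferral of well-definedness of $\frac{df}{dt}$ to the existence discussion are at the same level of rigor as the paper's own proof.
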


\section{Computing the First Order Derivative in the Diffusion Process}
\label{sec:disp}

In Section~\ref{sec:diffusion}, we define a diffusion process,
whose purpose is to define an operator $\Lo_w f := - \frac{df}{dt}$,
where $f \in \R^V$ is in the density space.  In this section,
we show that the diffusion rules uniquely determine the first order
derivative vector $\frac{df}{dt}$; moreover, we give an algorithm to compute it.

\noindent \textbf{Infinitesimal Considerations.}
In Definition~\ref{defn:rules}, 
if a vertex $u$ is losing measure due to the pair $(e, j)$
and $u \neq j$, then $u$ must be in $S_e(f)$.
However, $u$ must also continue to stay in $S_e(f)$ in infinitesimal time;
otherwise, if $u$ is about to leave $S_e(f)$, then 
$u$ should no longer lose measure due to $(e,j)$.
Hence, the vertex~$u$ should have the maximum first-order derivative of $f_u$
among vertices in $S_e(f)$.
A similar rule should hold when $u$ is gaining measure due to $(e,j)$ and $u \neq j$.
This is formalized as the first-order variant of \textsf{(R1)}:

Rule \textsf{(R3)} First-Order Derivative Constraints:

If $\vp'_u(e,j)<0$ and $u \neq j$, then $u \in \argmax_{s \in S_e(f)} \frac{d f_s}{dt}$.

If $\vp'_u(e, j)>0$ and $u \neq j$, then $u \in \argmin_{i \in I_e(f)} \frac{d f_i}{dt}$.

\noindent \textbf{Considering Each Equivalence Class $U$ Independently.}
As in~\cite{chan2018jacm},
we consider the equivalence relation induced by $f \in \R^V$,
where two vertices $u$ and $v$ are in the same equivalence class \emph{iff} $f_u = f_v$.
For vertices in some equivalence class~$U$,
their current $f$ values are the same, but their values could be about
to be separated because their first derivatives might be different.

\noindent \textbf{Subset with the Largest First Derivative: Densest Subset.}
Suppose $X \subseteq U$ are the vertices having the largest derivative in $U$.
Then, these vertices should receive or contribute rates of measure in each of the following cases.

\begin{compactitem}

\item[1.] The subset $X$ receives measure due to edges $I_X := \{e \in E: I_e(f) \subseteq X\}$,
because the corresponding vertices in $X$ continue to have minimum $f$ values in these edges;
we let $c^I_e \geq 0$ be the rate of measure received by $I_e(f)$ due to $(e,j)$
for $j \notin I_e(f)$.

\item[2.] The subset $X$ contributes measure due to edges $S_X := \{e \in E: S_e(f) \cap X \neq \emptyset\}$,
because the corresponding vertices in $X$ continue to have maximum $f$ values in these edges;
we let $c^S_e \geq 0$ be the rate of measure delivered by $S_e(f)$ due to $(e,j)$
for $j \notin S_e(f)$.

\item[3.] Each $j \in X$ receives or contributes measure due to all $(e,j)$'s such that $e \in E$ and $j \in e$;
we let $c_j \in \R$ be the net rate of measure received by vertex~$j$
due to $(e,j)$ for all $e \in E$ such that $j \in e$.

\end{compactitem}

Hence, the net rate of measure received by $X$
is

\centerline{
$\mathfrak{C}(X) := \sum_{e \in I_X} c^I_e - \sum_{e \in S_X} c^S_e + \sum_{j \in X} c_j$.
}

Therefore, given an instance $(U, I_U, S_U)$,
the problem is to find a maximal subset~$P\subseteq U$ with
the largest density $\delta(P) := \frac{\mathfrak{C}(P)}{w(P)}$,
which will be the $\frac{df}{dt}$ values for the vertices in $P$.
For the remaining vertices in $U$,
the sub-instance $(U \setminus P, I_U \setminus I_P, S_U \setminus S_P)$
is solved recursively.  The procedure and the precise parameters are given in Fig.~\ref{fig:define_r}.
Efficient algorithms for this densest subset problem are described in~\cite{DanischCS17,chan2018jacm}.

\begin{figure}[H]
\begin{tabularx}{\columnwidth}{|X|}
\hline

\vspace{5pt}
Given a hypergraph $H=(V,E,w)$ and a vector $f\in\R^V$ in the density space,
define an equivalence relation on $V$ such that $u$ and $v$ are in the same equivalence class \emph{iff} $f_u=f_v$.
We consider each such equivalence class $U \subseteq V$ and define the $r = \frac{df}{dt}$ values for vertices in $U$
as follows.

\begin{enumerate}
	
	\item 
Denote $E_U:=\{e\in E: U \cap [I_e(f) \cup S_e(f)]\not= \emptyset  \}$.

For $e \in E$, define

$c^I_e := w_e \cdot [\beta^e_0  \cdot \max_{s,i \in e} (f_s - f_i)
+ \sum_{j \in e} \beta^e_j \cdot (f_j - \min_{i \in e} f_i)],$

$c^S_e := w_e \cdot [\beta^e_0  \cdot \max_{s,i \in e} (f_s - f_i)
+ \sum_{j \in e} \beta^e_j \cdot (\max_{s \in e} f_s - f_j)];$

for $j \in V$, define
$c_j := \sum_{e \in E: j \in e} \beta^e_j \cdot 
w_e \cdot (\max_{s \in e} f_s + \min_{i \in e} f_i - 2 f_j).$

For $X\subseteq U$, define $I_X:=\{e\in E_U: I_e(f) \subseteq X\}$, $S_X:=\{e\in E_U:  S_e(f) \cap X\not=\emptyset\}$.

Denote $\mathfrak{C}(X):=\sum_{e\in I_X} c^I_e
-\sum_{e\in S_X}c^S_e
+\sum_{j \in X} c_j$
and $\delta(X):=\frac{\mathfrak{C}(X)}{w(X)}$.

\item Find $P\subseteq U$ such that $\delta(P)$ is maximized.
For all $u\in P$, set $r_u:=\delta(P)$.

\item Recursively, find the $r$ values for the remaining vertices in $U':=U\setminus P$ using $E_{U'}:=E_U\setminus (I_P\cup S_P)$.

\end{enumerate}

\\
\hline 
\end{tabularx}
\caption{Procedure to compute $r=\frac{df}{dt}$}
\label{fig:define_r}
\end{figure}

The next lemma shows that
the procedure in Fig.~\ref{fig:define_r}
returns a vector $r \in \R^V$ 
that coincides with the first-order derivative $\frac{df}{dt}$
of the density vector obeying rules~\textsf{(R0)} to \textsf{(R3)}.
This implies that these rules uniquely determine the first-order derivative.
Given $f \in \R^V$ and $r = \frac{df}{dt}$,
we denote
$r_S(e) := \max_{u \in S_e(f)} r_u$ and $r_I(e) := \min_{u \in I_e(f)} r_u$.

\begin{lemma}[Densest Subset Problem Determines First-Order Deriative]
\label{lemma:define_lap}
Given a density vector $f \in \R^V$,
rules~\textsf{(R0)} to \textsf{(R3)}
uniquely determine $r = \frac{df}{dt} \in \R^V$,
which can be found by the procedure
described in Fig.~\ref{fig:define_r}.
Moreover,
$\sum_{e \in E} c^I_e\cdot r_I(e)
-\sum_{e \in E} c^S_e\cdot r_S(e)
+\sum_{j\in V}c_j\cdot r_j= \sum_{u \in V} \vp'_u r_u = \|r\|^2_w$.
\end{lemma}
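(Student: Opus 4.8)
I would follow the strategy of the corresponding result for the special form $\Q^0$ in~\cite{chan2018jacm}, handling each equivalence class $U$ of the relation $u\sim v\iff f_u=f_v$ separately, and verifying that the only genuinely new ingredient — the mediator terms $c_j$ — does not interfere. The first step is to show that, for a fixed class $U$ and its edge set $E_U$, the set function $\mathfrak{C}$ on subsets of $U$ is supermodular: the map $X\mapsto\sum_{e:I_e(f)\subseteq X}c^I_e$ is supermodular since each $X\mapsto\mathbf{1}\{I_e(f)\subseteq X\}$ is a supermodular $\{0,1\}$-valued function and $c^I_e\ge 0$; the map $X\mapsto-\sum_{e:S_e(f)\cap X\ne\emptyset}c^S_e$ is supermodular since a coverage function is submodular and $c^S_e\ge 0$; and the new term $X\mapsto\sum_{j\in X}c_j$ is modular. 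Hence $\mathfrak{C}$ is supermodular, and a standard uncrossing argument yields a unique inclusion-maximal maximizer $P$ of $\delta(X)=\mathfrak{C}(X)/w(X)$ over nonempty $X\subseteq U$, so Step~2 of Fig.~\ref{fig:define_r} is well-defined; since $|U\setminus P|<|U|$, the recursion terminates, producing $r$ constant on each peeled layer with strictly decreasing values across successive rounds, as in~\cite{chan2018jacm}.

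Next I would verify that this $r$ satisfies rules \textsf{(R0)}--\textsf{(R3)} and is the unique such vector, by induction on the number of layers. The structural point is that $I_e(f)$ and $S_e(f)$ each lie in a single equivalence class, so for $e\in E_U$ the set $e\cap U$ coincides with $I_e(f)$ or with $S_e(f)$; together with the first-order rule \textsf{(R3)} this identifies the vertices of $I_e(f)$, resp.\ $S_e(f)$, resp.\ the mediator, that actually exchange measure as precisely those sitting in the layer of extremal first derivative. A direct accounting then shows the net rate of incoming measure to the top layer $P$ equals $\mathfrak{C}(P)$: the $\beta^e_0$-flow and the mediator-to-$I_e(f)$ flow contribute $c^I_e$ for $e\in I_P$, the $S_e(f)$-to-mediator and $S_e(f)$-to-$I_e(f)$ flows contribute $-c^S_e$ for $e\in S_P$, and each $j\in P$ contributes $c_j$; an internal flow leg $S_e(f)\to j$ or $j\to I_e(f)$ with both ends in $P$ is counted with cancelling signs (in $c^S_e$ and $c_j$, resp.\ in $c^I_e$ and $c_j$), so there is no double counting. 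Thus \textsf{(R0)} forces $r_u=\delta(P)$ on $P$, and a short exchange argument using supermodularity shows that any $r$ obeying \textsf{(R0)}--\textsf{(R3)} must have $P$ as its set of vertices of maximum first derivative with common value $\delta(P)$; deleting $P$ and the edges $I_P\cup S_P$ reproduces the same problem on $U\setminus P$, handled by induction.

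For the displayed identity, the second equality is immediate: \textsf{(R0)} gives $\vp'_u=w_u r_u$, so $\sum_u\vp'_u r_u=\sum_u w_u r_u^2=\|r\|_w^2$. For the first equality I would write $\sum_u\vp'_u r_u=\sum_{e\in E}\sum_{j\in[e]}\sum_{u\in e}\vp'_u(e,j)\,r_u$ and evaluate each inner sum using \textsf{(R2)} for the total rates and \textsf{(R3)} for extremality (every vertex that receives, resp.\ loses, due to $(e,j)$ has $r$-value $r_I(e)$, resp.\ $r_S(e)$): for $j=0$ it equals $w_e\beta^e_0\max_{s,i\in e}(f_s-f_i)\,(r_I(e)-r_S(e))$, and for $j\in e$ it equals $w_e\beta^e_j\big[(f_j-\min_{i\in e}f_i)\,r_I(e)-(\max_{s\in e}f_s-f_j)\,r_S(e)+(\max_{s\in e}f_s+\min_{i\in e}f_i-2f_j)\,r_j\big]$. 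Summing over all $(e,j)$ and collecting the coefficients of $r_I(e)$, $r_S(e)$ and $r_j$ reproduces exactly $c^I_e$, $-c^S_e$ and $c_j$ from Fig.~\ref{fig:define_r}, which is the claim.

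The routine parts are the supermodularity check (once the mediator term is recognised as modular) and the algebraic rearrangement for the identity. The main obstacle is the combined correctness-and-uniqueness argument of the second paragraph: one must confirm that rules \textsf{(R0)}--\textsf{(R3)}, in particular the first-order rule \textsf{(R3)}, pin down exactly the layered structure produced by iterated densest-subset extraction, and that mediator flows are booked without double counting — this is the point at which the generalization beyond~\cite{chan2018jacm} actually has to be re-examined rather than merely cited.
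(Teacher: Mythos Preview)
Your proposal is correct and follows the same overall architecture as the paper: decompose into equivalence classes of $f$, identify the top layer with the maximal densest subset, and recurse. Two points of difference are worth noting. First, you make the supermodularity of $\mathfrak{C}$ explicit (the mediator term being modular is the key observation), whereas the paper leaves the well-definedness of the maximal densest subset implicit and instead argues directly: it takes $T$ to be the set of vertices of maximum $r$-value, observes that the net inflow into $T$ is exactly $\mathfrak{C}(T)$ while the net inflow into the algorithm's $P$ is at least $\mathfrak{C}(P)$, and then squeezes $\delta(P)\le r_v\le\delta(T)\le\delta(P)$ to conclude $T=P$. Your supermodularity route and your careful bookkeeping of internal mediator legs make the same conclusion more transparent. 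Second, for the displayed identity the paper sums layer by layer, using that every edge is accounted for exactly once in some $I_T$ and once in some $S_T$ and that $r$ is constant on each $T$; you instead expand $\sum_u\vp'_u r_u$ edge by edge via \textsf{(R2)} and \textsf{(R3)} and regroup coefficients. Both computations are straightforward once \textsf{(R3)} is in hand; yours avoids invoking the layer partition and is arguably more direct.
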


\begin{proof}
Using the same approach as in~\cite{chan2018jacm},
we consider each equivalence class $U$ in Fig.~\ref{fig:define_r},
where all vertices in a class have the same $f$ values.

For each such equivalence
class $U \subset V$,
define $I_U := \{e \in E: U\cap I_e(f)\not=\emptyset\}$,
$S_U := \{e \in E: U\cap S_e(f)\not=\emptyset\}$.
Notice that each $e$ can only be in exactly
one of $I_U$ and $S_U$.

\noindent \textbf{Considering Each Equivalence Class $U$.}
Suppose $T$ is the set of vertices within $U$ that have
the maximum first-order derivative $r = \frac{df}{dt}$.
It suffices to show that $T$ is the maximal densest subset
in the densest subset instance $(U, I_U  \cup S_U)$
defined in Fig.~\ref{fig:define_r}.

Because of rule~\textsf{(R3)},
the rate of net measure received by~$T$ 
is $\mathfrak{C}(T)$.  Hence,
all vertices~$u \in T$ have $r_u = \frac{\mathfrak{C}(T)}{w(T)}$.

Next, suppose~$P$ is the maximal densest subset
found in Fig.~\ref{fig:define_r}.
Observe that the net rate of measure 
entering $P$ is at least $\mathfrak{C}(P)$.
Hence, there exists some vertex~$v \in P$
such that $\frac{\mathfrak{C}(P)}{w(P)} \leq r_v \leq \frac{\mathfrak{C}(T)}{w(T)}$,
where the last inequality follows from the definition of $T$.

Since $P$ is the maximal densest subset, it follows that in the above inequality,
actually all equalities hold and all vertices in $P$ have the same $r$ value.  
In general, the maximal densest subset
contains all densest subsets, and it follows that $T \subseteq P$.  
Since all vertices in $P$ have the maximum $r$ value within $U$, we conclude that $P = T$.

\noindent \emph{Recursive Argument.} Hence, it follows that
the set $T$ can be uniquely identified
in Fig.~\ref{fig:define_r}
as the set of vertices having maximum $r$ values,
which is also the unique maximal densest subset.
Then, the argument can be applied
recursively for the smaller instance with
$U' := U \setminus T$, $I_{U'} := I_U \setminus I_T$,
$S_{U'} := S_U \setminus S_T$.

\vspace{10pt}

\noindent \textbf{Claim.}
$
\sum_{e \in E} c^I_e\cdot r_I(e)
-\sum_{e \in E} c^S_e\cdot r_S(e)
+\sum_{j\in V}c_j\cdot r_j
= \sum_{u \in V} \vp'_u r_u = \|r\|^2_w.
$

Consider some $T$ defined above with $\delta := \delta(T) = r_u$, for $u \in T$.

Observe that
\begin{align}
\begin{split}
\textstyle
\sum_{u \in T} \vp'_u r_u
&=\left(
\textstyle
\sum_{e\in I_T} c^I_e
-\sum_{e\in S_T}c^S_e
+\sum_{j \in T} c_j
\right) \cdot \delta
\\
&\textstyle
= \sum_{e \in I_T} c^I_e \cdot\min_{i\in I_e}r_i
- \sum_{e \in S_T} c^S_e \cdot \max_{s\in S_e}r_s
+\sum_{j\in T}c_j\cdot r_j
\end{split}
\nonumber
\end{align}
where the last equality is due to rule~\textsf{(R3)}.

Observe that every $u \in V$ will be in exactly one such $T$, and
every $e \in E$ will be accounted for exactly once in each of $I_T$ and $S_T$, ranging over all $T$'s.  Hence, summing over all $T$'s gives the result.
\qed
\end{proof}

\section{Spectral Properties of Laplacian}
\label{sec:eigen}

A classical result in spectral graph theory
is that for a $2$-graph whose edge weights
are given by the adjacency matrix $A$,
the parameter $\gamma_2 := \min_{\zero \neq x \perp  \Wh \one} \Dc(x)$ is an eigenvalue of the normalized
Laplacian \mbox{$\Lc := \I - \Wmh A \Wmh$}, 
where a corresponding minimizer $x_2$ is an eigenvector
of $\Lc$.
Observe that $\gamma_2$ is also an eigenvalue
on the operator $\Lo_w := \I - \Wm A$ induced on the density space.

In this section, we generalize the result to hypergraphs.
Observe that any result for the normalized space has an equivalent counterpart in the density space, and vice versa.

\begin{theorem}[Eigenvalue of Hypergraph Laplacian]
    \label{th:hyper_lap}
    For a hypergraph with edge weights $w$,
    there exists a normalized Laplacian $\Lc$ such that the
    normalized discrepancy ratio $\Dc(x)$ coincides
    with the corresponding Rayleigh quotient $\rayc(x)$.
    Moreover, 
    the parameter $\gamma_2 := \min_{\zero \neq x \perp  \Wh \one} \Dc(x)$ is an eigenvalue of $\Lc$,
    where any minimizer $x_2$ is a corresponding eigenvector.
\end{theorem}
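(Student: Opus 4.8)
The plan is to mimic the variational argument from classical spectral graph theory, adapted to the nonlinear operator $\Lc$ defined via the diffusion process. First I would establish that $\Lc$ is well-defined by Lemma~\ref{lemma:define_lap}: the diffusion rules uniquely determine $r = \frac{df}{dt}$, so $\Lo_w f := -\frac{df}{dt}$ and its normalized counterpart $\Lc := \Wh \Lo_w \Wmh$ are genuine (generally nonlinear) operators on $\R^V$. By Lemma~\ref{lemma:ray_disc}, $\rayc(x) = \Dc(x)$ for all $x$, which already gives the first assertion of the theorem. It is also immediate that $\Lc(\Wh \one) = \zero$, since a constant density vector induces no flow; so $\gamma_1 = 0$ with eigenvector $\Wh \one$, and the restriction to the orthogonal complement $\{x : x \perp \Wh \one\}$ is the natural place to look for $\gamma_2$.

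Next I would take $x_2$ to be a minimizer of $\Dc(x) = \rayc(x)$ over the sphere $\{x : \|x\| = 1,\ x \perp \Wh \one\}$; such a minimizer exists because $\Dc$ is continuous and the domain is compact. The heart of the proof is a first-order optimality (stationarity) argument: since $x_2$ minimizes the Rayleigh quotient on this subspace, for any perturbation direction the derivative of $\rayc$ must vanish. Because $\rayc(x) = \frac{\langle x, \Lc x\rangle}{\langle x, x\rangle}$ and $\langle x, \Lc x \rangle = \Dc(x)\langle x,x\rangle$ is (as one checks from the quadratic form $\Q$) a piecewise-quadratic, homogeneous degree-$2$ function, one can compute a directional derivative and conclude that $\Lc x_2 - \gamma_2 x_2$ is orthogonal to the subspace $\{x : x \perp \Wh \one\}$, hence parallel to $\Wh \one$. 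Pairing with $x_2$ itself (using $x_2 \perp \Wh \one$ and $\langle x_2, \Lc x_2\rangle = \gamma_2$) forces the constant of proportionality to be zero, so $\Lc x_2 = \gamma_2 x_2$, i.e.\ $x_2$ is an eigenvector with eigenvalue $\gamma_2$. Equivalently and perhaps more cleanly in this nonlinear setting, I would translate everything to the density space and diffusion dynamics: along the diffusion trajectory started at $f_2 = \Wmh x_2$, the Rayleigh quotient $\ray_w(f)$ is non-increasing (a standard consequence of the diffusion reducing the quadratic form while controlling the norm), and a minimizer must be a fixed point of the normalized flow, which is exactly the eigenvector condition $\Lo_w f_2 = \gamma_2 f_2$.

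The main obstacle is handling the nonlinearity and nonsmoothness of $\Lc$: unlike the linear case, $\Lc$ is only piecewise linear, the quadratic form $\Q$ has the $\max$ and $\min$ operations baked in, and the equivalence-class structure from Fig.~\ref{fig:define_r} means the ``active'' linear piece changes across $\R^V$. So the directional-derivative computation has to be done carefully at the possibly non-differentiable point $x_2$, using one-sided derivatives and the fact established in Lemma~\ref{lemma:define_lap} that $\sum_{u} \vp'_u r_u = \|r\|_w^2$ together with the identity $\langle f, \Lo_w f\rangle_w = \Q(f)$. I expect to argue that even though $\Q$ is not globally smooth, the relevant directional derivatives of $\langle x, \Lc x\rangle$ in directions within the subspace exist and the stationarity condition still pins down $\Lc x_2$ up to the $\Wh\one$ direction; this is the step where the machinery of the preceding sections (especially the precise flow-rate bookkeeping and the densest-subset characterization) does the real work. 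The remaining steps — compactness for existence of $x_2$, the $\gamma_1 = 0$ observation, and the final pairing argument — are routine.
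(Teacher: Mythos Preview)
Your second approach---running the diffusion from $f_2 = \Wmh x_2$ and using monotonicity of the Rayleigh quotient---is exactly what the paper does, and it is the cleaner route for precisely the reason you identify. The paper packages the two needed facts as Lemma~\ref{lemma:deriv} (along the flow, $\frac{d}{dt}\rayc(x) \leq 0$, with equality iff $\Lc x \in \spn(x)$, via Cauchy--Schwarz) and Lemma~\ref{lemma:lap_proj} (the flow direction $-\Lc x$ is always orthogonal to $\Wh\one$, by conservation of total measure). At a minimizer $x_2$ these combine in one step: if the derivative were strictly negative, moving infinitesimally along $-\Lc x_2$ would stay in the constraint set and lower the Rayleigh quotient, a contradiction; hence $\Lc x_2 \in \spn(x_2)$. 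Crucially, the nonsmoothness obstacle you worry about never arises, because the paper differentiates $\Q$ only along the single diffusion direction, where the Envelope Theorem handles the $\max/\min$.

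Your first approach (full stationarity over all perturbation directions $v \perp \Wh\one$) could also be made to work, since $\Q$ is convex and subdifferential calculus applies, but there is a slip in your sketch: pairing $\Lc x_2 - \gamma_2 x_2 = c\,\Wh\one$ with $x_2$ yields the vacuous identity $0=0$, because both $\langle x_2, \Lc x_2\rangle - \gamma_2\|x_2\|^2$ and $\langle x_2, \Wh\one\rangle$ vanish. The correct way to kill the $\Wh\one$ component is to invoke $\Lc x_2 \perp \Wh\one$ directly (conservation of measure, Lemma~\ref{lemma:lap_proj}), which you will need in either approach.
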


Before proving Theorem~\ref{th:hyper_lap},
we first consider the spectral properties
of the normalized Laplacian $\Lc$ induced
by the diffusion process defined in Section~\ref{sec:disp}.

\begin{lemma}[First-Order Derivatives]
    \label{lemma:deriv}
    Consider the diffusion process satisfying rules~\textsf{(R0)}
    to~\textsf{(R3)} on 
    the measure space with $\vp \in \R^V$, which
    corresponds to $f = \Wm \vp$ in the density space.
    Suppose 
    $\Lo_w$ is the induced operator on the density space such that
    $\frac{d f}{d t} = - \Lo_w f$.
    Then, we have the following derivatives.
    
    \begin{compactitem}
        \item[1.] $\frac{d \|f\|^2_w}{dt} = - 2 \langle f, \Lo_w f \rangle_w$.
        \item[2.] $\frac{d \langle f, \Lo_w f \rangle_w}{dt} 
        = - 2 \|\Lo_w f \|^2_w$.
        \item[3.] Suppose $\ray_w(f)$ is the Rayleigh quotient
        with respect to the operator $\Lo_w$ on the density space.
        Then, for $f \neq \zero$, $\frac{d \ray_w(f)}{dt} = -\frac{2}{\|f\|^4_w} \cdot
        (\|f\|^2_w \cdot \|\Lo_w f\|^2_w - \langle f , \Lo_w f \rangle^2_w) \leq 0$,
        by the Cauchy-Schwarz inequality
        on the $\langle \cdot , \cdot \rangle_w$ inner product, where equality
        holds \emph{iff} $\Lo_w f \in \spn(f)$.
				
				By considering a transformation to the normalized space, for any $x \neq \zero$,
				$\frac{d \rayc(x)}{dt} \leq 0$, where equality holds \emph{iff} $\Lc x \in \spn(x)$.

    \end{compactitem}
    
\end{lemma}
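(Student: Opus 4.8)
The plan is to reduce all three derivative computations to the single identity from Lemma~\ref{lemma:define_lap}, namely $\sum_{u\in V}\vp'_u r_u = \|r\|_w^2$, where $r = \tfrac{df}{dt} = -\Lo_w f$ and $\vp'_u = w_u r_u$. The key point is that even though $r$ is only piecewise-defined and the sets $S_e(f), I_e(f)$ may change over time, the first-order behavior at the current instant is governed entirely by this identity, so the three statements are essentially algebraic consequences once we commit to working with one-sided derivatives.

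For part~1, I would write $\|f\|_w^2 = \sum_u w_u f_u^2$, differentiate to get $\tfrac{d}{dt}\|f\|_w^2 = 2\sum_u w_u f_u \tfrac{df_u}{dt} = 2\langle f, \tfrac{df}{dt}\rangle_w = -2\langle f, \Lo_w f\rangle_w$. This is immediate. For part~2, the quantity $\langle f, \Lo_w f\rangle_w = \Q(f)$ by Lemma~\ref{lemma:ray_disc}, but the cleaner route is to use the Rayleigh-quotient/diffusion identity directly: write $\langle f, \Lo_w f\rangle_w = -\langle f, r\rangle_w = -\sum_u w_u f_u r_u$, differentiate, and split into $-\sum_u w_u (\tfrac{df_u}{dt}) r_u - \sum_u w_u f_u (\tfrac{dr_u}{dt})$. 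The first sum is $-\|r\|_w^2 = -\|\Lo_w f\|_w^2$. For the second sum I want to show it also equals $-\|\Lo_w f\|_w^2$; this is where the real content sits, and I would invoke the structural identity of Lemma~\ref{lemma:define_lap} — expressing $\Q(f)$ through the edge quantities $c_e^I r_I(e) - c_e^S r_S(e) + c_j r_j$ and using that $\tfrac{d}{dt}$ of the max/min densities in each edge is given by $r_S(e)$ and $r_I(e)$ respectively (by rule~\textsf{(R3)}, the vertices attaining the edge extrema are precisely those with extremal derivative, so they stay extremal in infinitesimal time). Concretely, differentiating $\Q(f) = \sum_e w_e[\beta_0^e(\max_s f_s - \min_i f_i)^2 + \sum_j \beta_j^e((\max_s f_s - f_j)^2 + (f_j - \min_i f_i)^2)]$ term by term, replacing each $\tfrac{d}{dt}\max_{s\in e} f_s$ by $r_S(e)$ and each $\tfrac{d}{dt}\min_{i\in e} f_i$ by $r_I(e)$, should reassemble exactly into $-2\sum_u \vp'_u r_u = -2\|r\|_w^2$, giving $\tfrac{d}{dt}\langle f,\Lo_w f\rangle_w = -2\|\Lo_w f\|_w^2$.

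Part~3 is then a routine quotient-rule computation: with $N = \langle f,\Lo_w f\rangle_w$ and $D = \|f\|_w^2$, $\tfrac{d}{dt}\ray_w(f) = \tfrac{N' D - N D'}{D^2} = \tfrac{(-2\|\Lo_w f\|_w^2)\|f\|_w^2 - \langle f,\Lo_w f\rangle_w(-2\langle f,\Lo_w f\rangle_w)}{\|f\|_w^4} = -\tfrac{2}{\|f\|_w^4}\bigl(\|f\|_w^2\|\Lo_w f\|_w^2 - \langle f,\Lo_w f\rangle_w^2\bigr)$, which is $\leq 0$ by Cauchy--Schwarz in the $\langle\cdot,\cdot\rangle_w$ inner product, with equality iff $\Lo_w f$ and $f$ are parallel. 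The normalized-space statement follows from $\ray_w(f) = \rayc(x)$ when $x = \Wh f$, and $\Lc x \in \spn(x) \iff \Lo_w f \in \spn(f)$ since $\Lc = \Wh\Lo_w\Wmh$. The main obstacle is the bookkeeping in part~2: justifying that $\tfrac{d}{dt}(\max_{s\in e} f_s) = r_S(e)$ and $\tfrac{d}{dt}(\min_{i\in e} f_i) = r_I(e)$ as one-sided derivatives (which needs rule~\textsf{(R3)} and the fact that the extremal sets shrink but never suddenly gain new members in infinitesimal time), and then checking that the term-by-term differentiation of $\Q(f)$ recombines precisely into the expression $\sum_e c_e^I r_I(e) - \sum_e c_e^S r_S(e) + \sum_j c_j r_j$ from Lemma~\ref{lemma:define_lap}; everything else is mechanical.
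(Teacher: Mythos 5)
Your proposal takes essentially the same route as the paper: part~1 is the same one-line computation, part~2 is the term-by-term (Envelope Theorem) differentiation of $\Q(f)$ with $\frac{d}{dt}\max_{s\in e}f_s$ and $\frac{d}{dt}\min_{i\in e}f_i$ replaced by $r_S(e)$ and $r_I(e)$ via rule~\textsf{(R3)}, reassembled into the identity of Lemma~\ref{lemma:define_lap} to get $-2\|\Lo_w f\|_w^2$, and part~3 is the same quotient-rule plus Cauchy--Schwarz argument transferred to the normalized space. One caution: drop the intermediate product-rule framing $\frac{d}{dt}\langle f, \Lo_w f\rangle_w = -\|r\|_w^2 - \sum_u w_u f_u \frac{dr_u}{dt}$, since $\frac{dr_u}{dt}$ involves second-order derivatives of $f$ that rules \textsf{(R0)}--\textsf{(R3)} do not determine; your ``concrete'' computation differentiating $\Q(f)$ directly (which is exactly what the paper does) already yields the claim without it.
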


\begin{proof}
    For the first statement,
    $\frac{d \|f\|_w^2}{d t} = 2 \langle f, \frac{d f}{d t} \rangle_w
    = - 2 \langle f, \Lo_w f \rangle_w$.
    
    For the second statement,
    from the proof of Lemma~\ref{lemma:ray_disc}
    we have
    $$\langle f, \Lo_w f \rangle_w
    =\textstyle
    \sum_{e\in E}\w{e}\{\beta^e_0\max_{s,i\in e}\left(\f{s}-\f{i}\right)^2+\sum_{j\in e}\beta^e_{j}[\left(\max_{s\in e}\f{s}-\f{j}\right)^2
    +\left(\f{j}-\min_{i\in e}\f{i}\right)^2]\}.$$
    
    Hence, by the Envelope Theorem,
    \begin{align}
    \begin{split}
    \textstyle
    \frac{d \langle f, \Lo_w f \rangle_w}{dt}
    =&
    \textstyle
    2 \sum_{e \in E} \w{e}
    \left[\vphantom{\sum_{j\in e}}
    \beta^e_0\max_{s,i\in e}\left(\f{s}-\f{i}\right)
    \left(\max_{s\in S_e}\frac{df_s}{dt}-\min_{i\in I_e}\frac{df_i}{dt}\right)
    \right]
    \\
    &\textstyle
    +\sum_{j\in e}\beta^e_{j}\left(\max_{s\in e}\f{s}-\f{j}\right)
    \left(\max_{s\in S_e}\frac{df_s}{dt}-\frac{df_j}{dt}\right)
    \\
    &\textstyle
    \left.
    +\sum_{j\in e}\beta^e_{j}\left(\f{j}-\min_{i\in e}\f{i}\right)
    \left(\frac{df_j}{dt}-\min_{i\in I_e}\frac{df_i}{dt}\right)
    \right]
    \\
    =&\textstyle
    2 \sum_{e \in E} \w{e}\left\{
    \left[
    \beta^e_0\max_{s,i\in e}\left(\f{s}-\f{i}\right)
    +\sum_{j\in e}\beta^e_{j}\left(\max_{s\in e}\f{s}-\f{j}\right)
    \right]
    \max_{s\in S_e}\frac{df_s}{dt}
    \right.
    \\
    &\textstyle
    -\left[
    \beta^e_0\max_{s,i\in e}\left(\f{s}-\f{i}\right)
    +\sum_{j\in e}\beta^e_{j}\left(\f{j}-\min_{i\in e}\f{i}\right)
    \right]
    \min_{i\in I_e}\frac{df_i}{dt}
    \\
    &\textstyle
    \left.
    +\sum_{j\in e}\beta^e_{j}\left(2\f{j}-\max_{s\in e}\f{s}-\min_{i\in e}\f{i}\right)
    \frac{df_j}{dt}
    \right\}.
    \\
    =&\textstyle
    2 \left(
    \sum_{e \in E}
    c^I_e\cdot\max_{s\in S_e}r_s
    -\sum_{e \in E}
    c^S_e\cdot\max_{i\in I_e}r_i
    -\sum_{j\in V}c_j\cdot r_j
    \right)
    \end{split}
    \nonumber
    \end{align}

    \noindent where $c^I_e,c^S_e,c_j$ are defined in Fig.~\ref{fig:define_r}.
    From Lemma~\ref{lemma:define_lap},
    this equals $- 2 \|r\|^2_w = - 2 \|\Lo_w f\|^2_w$.
    
    Finally, for the third statement, we have
    $\frac{d}{dt} \frac{\langle f, \Lo_w f \rangle_w}{\langle f, f \rangle_w} = \frac{1}{\|f\|^4_w} (\| f \|^2_w \cdot \frac{d \langle f, \Lo_w f \rangle_w}{ dt} -  \langle f, \Lo_w f \rangle_w \cdot \frac{d \|f\|^2_w}{dt})
    = -\frac{2}{\|f\|^4_w} \cdot
    (\|f\|^2_w \cdot \|\Lo_w f\|^2_w - \langle f , \Lo_w f \rangle^2_w)$,
    where the last equality follows from the first two statements.
\qed
\end{proof}

We next prove some properties
of the normalized Laplacian $\Lc$ with respect to orthogonal
projection in the normalized space.

\begin{lemma}[Laplacian and Orthogonal Projection]
    \label{lemma:lap_proj}
    Suppose $\Lc$ is the normalized Laplacian. 
    Moreover, denote $x_1 := \Wh \one$, and
    let $\Pi$ denote the orthogonal projection
    into the subspace that is orthogonal to $x_1$.
    Then, for all $x$, we have the following:
    \begin{compactitem}
        \item[1.] $\Lc(x) \perp x_1$,
        \item[2.] $\langle x, \Lc x \rangle = \langle \Pi x, \Lc \Pi x \rangle$.
        \item[3.] For all real numbers $a$ and $b$,
        $\Lc(a x_1 + b x) = b \Lc(x)$.
    \end{compactitem}
\end{lemma}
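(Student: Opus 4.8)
The plan is to treat the three claims in the order (1), (3), (2), deriving (2) from the first two. Throughout I pass to the density space via $x=\Wh f$ and $x_1=\Wh\one$: then $\Lc x=\Wh\Lo_w f$ and $\langle\Lc x,x_1\rangle=(\Lo_w f)^\T\W\one=\langle\Lo_w f,\one\rangle_w$, so each claim becomes a statement about $\Lo_w f=-r$, where $r=\frac{df}{dt}$ is the vector produced by the procedure in Fig.~\ref{fig:define_r}.

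\emph{Claim (1).} I would use conservation of measure. For every pair $(e,j)$ the rules of Definition~\ref{defn:rules} move measure only among the vertices of $e$ — from $S_e(f)$ to $j$ and from $j$ to $I_e(f)$ when $j\in e$, and directly from $S_e(f)$ to $I_e(f)$ when $j=0$ — so $\sum_{u\in V}\vp'_u(e,j)=0$; summing over all pairs gives $\sum_u\vp'_u=0$, i.e. $\one^\T\W r=0$, i.e. $r\perp_w\one$. Equivalently $\Lo_w f\perp_w\one$, and applying $\Wh$ yields $\Lc x\perp x_1$.

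\emph{Claim (3).} Here I would invoke the uniqueness half of Lemma~\ref{lemma:define_lap}. Fix $f$, with diffusion data $r$ and $\{\vp'_u(e,j)\}$. Given $a,b\in\R$ put $\tilde f:=a\one+bf$; if $b=0$ then $\tilde f$ is constant, all densities agree, every rate in Definition~\ref{defn:rules} vanishes, and its derivative is $\zero=b\,\Lo_w f$, so assume $b\neq0$. Propose $\tilde r:=b\,r$ and $\tilde\vp'_u(e,j):=b\,\vp'_u(e,j)$, and verify that these obey rules \textsf{(R0)}--\textsf{(R3)} at the state $\tilde f$. The verification rests on three facts: (i) every quantity entering Definition~\ref{defn:rules} — the differences $\max_{s,i\in e}(f_s-f_i)$, $\max_{s\in e}f_s-f_j$, $f_j-\min_{i\in e}f_i$, and hence $\vp'_u(e,j)$ — depends on $f$ only through coordinate differences, so it is unchanged by the shift $a\one$ and is multiplied by $b$ under the scaling, with $\max$ and $\min$ interchanged when $b<0$; (ii) $S_e(\tilde f),I_e(\tilde f)$ equal $S_e(f),I_e(f)$ when $b>0$ and are swapped when $b<0$; (iii) the equivalence classes of $\tilde f$ coincide with those of $f$ since $b\neq0$, so Fig.~\ref{fig:define_r} operates on the same partition and the maximum-derivative constraint \textsf{(R3)} transforms in step with the flows. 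Running these through \textsf{(R0)}--\textsf{(R3)}, the common factor $b$ — together with the fact that when $b<0$ the $\max\leftrightarrow\min$ swap exactly cancels the sign reversal of $\tilde\vp'_u(e,j)$ — makes every rate identity and every $\argmax/\argmin$ membership condition hold. By uniqueness, $\tilde r$ is the derivative at $\tilde f$, so $\Lo_w(a\one+bf)=-\tilde r=b\,\Lo_w f$, and applying $\Wh$ gives $\Lc(ax_1+bx)=b\,\Lc(x)$.

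\emph{Claim (2).} Write $x=cx_1+\Pi x$ with $c:=\langle x,x_1\rangle/\|x_1\|^2$. By (3) (taking $a=c$, $b=1$), $\Lc x=\Lc(\Pi x)$; by (1), $\Lc(\Pi x)\perp x_1$; hence $\langle x,\Lc x\rangle=\langle cx_1+\Pi x,\,\Lc(\Pi x)\rangle=\langle\Pi x,\Lc(\Pi x)\rangle$. The only substantive step is (3), and within it the bookkeeping for $b<0$: one must check that simultaneously swapping $S_e\leftrightarrow I_e$ and negating every $\vp'_u(e,j)$ is consistent both with the rate equations of \textsf{(R2)} and with the $\argmax/\argmin$ membership constraints of \textsf{(R1)} and \textsf{(R3)}. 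That is the main obstacle; the cases $b\ge0$ and claims (1) and (2) are essentially immediate.
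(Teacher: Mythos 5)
Your proposal is correct, and for claims (1) and (3) it follows essentially the paper's own route: (1) is conservation of total measure (you verify it per pair $(e,j)$, the paper states it once for the closed system), and (3) rests on the same three observations --- shift-invariance of all rates, positive scaling, and the $S_e \leftrightarrow I_e$ swap under negation --- though you package them more rigorously by exhibiting the candidate solution $\tilde r = b\,r$, $\tilde\vp'_u(e,j)=b\,\vp'_u(e,j)$ and appealing to the uniqueness half of Lemma~\ref{lemma:define_lap}, where the paper argues directly and somewhat informally about the diffusion rules. The genuine divergence is claim (2): the paper proves it independently of (3), by quoting the explicit quadratic form $\langle x,\Lc x\rangle = \Q(\Wmh x)$ from Lemma~\ref{lemma:ray_disc} and noting that this expression is invariant under $x \mapsto x+\alpha x_1$, whereas you deduce (2) purely formally from (1) and (3) via the decomposition $x = c x_1 + \Pi x$. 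Your derivation is cleaner in that it needs no identity for the quadratic form and would apply to any operator satisfying (1) and (3); the paper's is self-contained for (2) alone and does not depend on the correctness of the more delicate claim (3). Both are valid, and your ordering (1), (3), (2) correctly isolates the only substantive verification --- the $b<0$ bookkeeping in (3) --- which you carry out consistently.
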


\begin{proof}
    For the first statement, observe that since the diffusion process
    is defined on a closed system, the total measure given by $\sum_{u \in V} \vp_u$ does not change.
    Therefore, $0 = \langle \one, \frac{d \vp}{d t} \rangle = \langle \Wh \one, \frac{d x}{d t} \rangle$,
    which implies that $\Lc x = - \frac{d x}{d t} \perp x_1$.
    
    For the second statement,
    observe that from Lemma~\ref{lemma:ray_disc},
    we have
    $
    \langle x, \Lc x \rangle
    =\sum_{e\in E}\w{e}\{
    \beta^e_0\max_{s,i\in e}(\frac{x_s}{\sqrt{w_s}}-\frac{x_i}{\sqrt{w_i}})^2
    +\sum_{j\in e}\beta^e_{j}[
    (\max_{s\in e}\frac{x_s}{\sqrt{w_s}}-\frac{x_j}{\sqrt{w_j}})^2
    +(\frac{x_j}{\sqrt{w_j}}-\min_{i\in e}\frac{x_i}{\sqrt{w_i}})^2
    ]
    \}
    =\langle (x + \alpha x_1), \Lc (x + \alpha x_1) \rangle,
    $
    where the last equality holds for all real numbers $\alpha$.
    Observe that $\Pi x = x + \alpha x_1$, for some suitable real $\alpha$.

    For the third statement, it is more convenient
    to consider transformation into the density space $f = \Wmh x$.
    It suffices to show that $\Lo_w (a \one + b f) = b \Lo_w(f)$.
		
		Observe that in the diffusion process, only pairwise difference in densities among vertices matters.
		Hence, we immediately have $\Lo_w (a \one + b f) = \Lo_w (b f)$.
		
		For $b \geq 0$, observe that all the rates are scaled by the same factor $b$.
		Hence, we have $\Lo_w(b f) = b \Lo_w(f)$.
		
		Finally, if we reverse the sign of every coordinate of $f$,
		then the roles of $S_e(f)$ and $I_e(f)$ are switched.  Moreover,
		the direction of every component of the measure flow is reversed with the same magnitude.
		Hence, $\Lo_w(-f) = - \Lo_w(f)$, and the result follows.
\qed
\end{proof}

\begin{proofof}{Theorem~\ref{th:hyper_lap}}
This follows the same argument as in~\cite{chan2018jacm}.
    Suppose $\Lc$ is the normalized Laplacian
    induced by the diffusion process in Lemma~\ref{lemma:define_lap}.
    Let $\gamma_2 := \min_{\zero \neq x \perp \Wh \one} \rayc(x)$
    be attained by some minimizer $x_2$.
    We use the isomorphism between the three spaces:
    $\Wmh \vp = x = \Wh f$.
    
    The third statement of Lemma~\ref{lemma:deriv}
    can be formulated in terms of the normalized space,
    which states that $\frac{d \rayc(x)}{d t} \leq 0$,
    where equality holds \emph{iff} $\Lc x \in \spn(x)$.
    
    We claim that $\frac{d \rayc(x_2)}{d t} = 0$.
    Otherwise, suppose $\frac{d \rayc(x_2)}{d t} < 0$.
    From Lemma~\ref{lemma:lap_proj},
    we have $\frac{dx}{dt} = - \Lc x \perp \Wh \one$.
    Hence, it follows that at this moment, the current normalized
    vector is at position $x_2$, and is moving
    towards the direction given by
    $x' := \frac{d x}{dt}|_{x=x_2}$ such that
    $x' \perp \Wh \one$, and $\frac{d \rayc(x)}{d t}|_{x=x_2} < 0$.
    Therefore, for sufficiently small $\eps > 0$,
    it follows that $x_2' := x_2 + \eps x'$ is a non-zero vector
    such that  $x_2' \perp \Wh \one$
    and $\rayc(x_2') < \rayc(x_2) = \gamma_2$, contradicting the definition of $x_2$.
    
    Hence, it follows that $\frac{d \rayc(x_2)}{d t} = 0$,
    which implies that $\Lc x_2 \in \spn(x_2)$.
    Since $\gamma_2 = \rayc(x_2) = \frac{\langle x_2, \Lc x_2 \rangle}{\langle x_2,  x_2 \rangle}$,
    it follows that $\Lc x_2 = \gamma_2 x_2$, as required.
\end{proofof}

{
\bibliographystyle{abbrv}
\bibliography{dihyper}
}

\newpage

\appendix

\section{Rayleigh Quotient Coincides with Discrepancy Ratio}
\label{sec:ray_disc}

To prove Lemma~\ref{lemma:ray_disc},
we first re-interpret the diffusion rules in Definition~\ref{defn:rules}
by considering the interaction between every pair of nodes.
Observe that the rules sometimes say that some measure is flow from
one subset of vertices to another subset.  Hence, at the moment,
the exact pairwise interactions are not specified.  In fact,
we know that in general, the pairwise interactions are not uniquely determined.
Fig.~\ref{fig:diffusion_framework} captures this
non-deterministic nature of the pairwise interactions.

\begin{figure}[H]
\begin{tabularx}{\columnwidth}{|X|}
\hline

\vspace{5pt}
Given a hypergraph $H(V,E,w)$
and a density vector~$f \in \R^V$,
we analyze the pairwise interaction between vertices
according to Definition~\ref{defn:rules}.

\begin{enumerate}
    \item  We shall describe constraints on the pairwise interaction
		by a symmetric matrix $A_f \in \R^{V \times V}$
		such that for $u, v \in V$,
		the $(u,v)$-th entry $a_{uv}$ means that
		between $u$ and $v$, there is measure flowing
		from the vertex of higher density to that of lower density, 
		at the rate of $a_{uv} \cdot |f_u - f_v|$.
		
		Furthermore, we decompose $a_{uv} := \sum_{e \in E} \sum_{j \in [e]} a^{(e,j)}_{uv}$,
		where the role of the pair $(e,j)$ is described in Definition~\ref{defn:rules}.
		We use the convention that $a^{(e,j)}_{uv} = a^{(e,j)}_{vu}$.

    \item \emph{Pairwise Interaction due to $(e,j)$.} For each $e\in E$
		and $j \in [e]$, we describe the pairwise interaction due to $(e,j)$.

    For $j=0$, by considering the rate of flow from $S_e(f)$ to $I_e(f)$ due to $(e,0)$,
		we infer that the partial weight
		$w_e \beta^e_0$ is somehow distributed among $(s,i)\in S_e(f)\times I_e(f)$.
		In other words, we have the constraint:
		
		$\sum_{(s,i)\in S_e(f)\times I_e(f)} a_{si}^{(e,0)}= w_e \beta^e_0$.

    For $j\in e$, by considering the rate of flow from $S_e(f)$ to $j$ due to $(e,j)$,
		we have the constraint:
		$\sum_{s\in S_e(f)}a_{sj}^{(e,j)}= w_e \beta^e_{j}$.
		
		Similarly, by considering the rate of flow from $j$ to $I_e(f)$ due to $(e,j)$, 
		we have the constraint:
		
		$\sum_{i\in I_e(f)}a_{ji}^{(e,j)}= w_e \beta^e_{j}$.

    For the remaining vertex pairs $(u,v)$ that are not involved with the pair $(e,j)$, we assign $a_{uv}^{(e,j)}=0$.
    
    \item In conclusion, the interaction matrix $A_f$ satisfies the following.
    For $u\not=v$, $A_f(u,v) = \sum_{e \in E} \sum_{j \in [e]} a^{(e,j)}_{uv}$;
		moreover, for each $u \in V$, the row of $A_f$ corresponding to $u$ sums to $w_u$.
		
		\end{enumerate}

Then, the diffusion process is described by $\W \frac{df}{dt} = \frac{d \vp}{dt} = (A_f \Wm - \I) \vp$.
Therefore, the resulting Laplacian operators satisfies
$\Lo(\vp)=(\I-A_f\Wm)\vp$ for the measure space
and $\Lo_w (f) = (\I - \Wm A_f) f$ for the density space.

\\
\hline 
\end{tabularx}
\caption{Constraints on Pairwise Interactions}
\label{fig:diffusion_framework}
\end{figure}

\begin{proofof}{Lemma~\ref{lemma:ray_disc}}
It suffices to show that
$$
\langle f, \Lo_w f \rangle_w =
\sum_{e\in E}\w{e}\left\{\beta^e_0\max_{s,i\in e}\left(\f{s}-\f{i}\right)^2+\sum_{j\in e}\beta^e_{j}\left[\left(\max_{s\in e}\f{s}-\f{j}\right)^2
+\left(\f{j}-\min_{i\in e}\f{i}\right)^2\right]\right\}.
$$

Recall that $\vp = \W f$,
and $\Lo_w = \I - \Wm A_f$,
where $A_f$ satisfies
the constaints in Fig.~\ref{fig:diffusion_framework}.

Hence,
it follows that
\begin{align}
\begin{split}
\langle f, \Lo_w f \rangle_w&=f^\T (\W - A_f) f
= \sum_{uv \in {V \choose 2}} a_{uv} (f_u - f_v)^2
=\sum_{uv \in {V \choose 2}} \sum_{e \in E} \sum_{j \in [e]} a^{(e,j)}_{uv} (f_u - f_v)^2
\\
&=\sum_{uv \in {V \choose 2}} \sum_{e \in E} \left[a^{(e,0)}_{uv} (f_u - f_v)^2+\sum_{j \in e} a^{(e,j)}_{uv} (f_u - f_v)^2\right]
\end{split}
\nonumber
\end{align}

For the first term, we have
\begin{align}
\begin{split}
&\sum_{uv \in {V \choose 2}} \sum_{e \in E} a^{(e,0)}_{uv} (f_u - f_v)^2\\
=&\sum_{uv \in {V \choose 2}} 
\sum_{e \in E:\{uv,vu\}\cap S_e \times I_e\not= \emptyset} a^{(e,0)}_{uv} (f_u - f_v)^2
\\
=&\sum_{e \in E}
\sum_{si \in {e \choose 2}:\{si,is\}\cap S_e \times I_e\not= \emptyset} a^{(e,0)}_{si} (f_s - f_i)^2
\\
=&\sum_{e \in E}
\sum_{si \in S_e \times I_e} a^{(e,0)}_{si} (f_s - f_i)^2
\\
=&\sum_{e \in E} 
w_e\beta^e_0 \max_{s,i\in e}\left(f_{s}-f_{i}\right)^2
\end{split}
\nonumber
\end{align}

For the second term, we have
\begin{align}
\begin{split}
&\sum_{uv \in {V \choose 2}} \sum_{e \in E} \sum_{j \in e} a^{(e,j)}_{uv} (f_u - f_v)^2\\
=&\sum_{uv \in {V \choose 2}} \sum_{e \in E} \left[
\sum_{j\in e: \{uv,vu\}\cap S_e \times j\not= \emptyset} a^{(e,j)}_{uv} (f_u - f_v)^2
+\sum_{j\in e: \{uv,vu\}\cap j \times I_e\not= \emptyset} a^{(e,j)}_{uv} (f_u - f_v)^2
\right]
\\
=&\sum_{e \in E} \sum_{si \in {e \choose 2}} \left[
\sum_{j\in e: \{si,is\}\cap S_e \times j\not= \emptyset} a^{(e,j)}_{si} (f_s - f_i)^2
+\sum_{j\in e: \{si,is\}\cap j \times I_e\not= \emptyset} a^{(e,j)}_{si} (f_s - f_i)^2
\right]
\\
=&\sum_{e \in E}\sum_{j\in e}\left[
\sum_{s\in S_e}a_{sj}^{(e,j)} (f_{s}-f_{j})^2
+\sum_{i\in I_e}a_{ji}^{(e,j)} (f_{j}-f_{i})^2\right]
\\
=&\sum_{e \in E}\sum_{j\in e}\left[
w_e\beta^e_{j} \left(\max_{s\in e}f_{s}-f_{j}\right)^2
+w_e\beta^e_{j} \left(f_{j}-\min_{i\in e}f_{i}\right)^2
\right]
\\
=&\sum_{e \in E}w_e\sum_{j\in e}
\beta^e_{j}\left[\left(\max_{s\in e}f_{s}-f_{j}\right)^2
+ \left(f_{j}-\min_{i\in e}f_{i}\right)^2
\right]
\end{split}
\nonumber
\end{align}

Thus, we conclude
\begin{align}
\begin{split}
\langle f, \Lo_w f \rangle_w
&=\sum_{uv \in {V \choose 2}} \sum_{e \in E} \left[a^{(e,0)}_{uv} (f_u - f_v)^2+\sum_{j \in e} a^{(e,j)}_{uv} (f_u - f_v)^2\right]
\\
&=\sum_{e\in E}\w{e}\left\{\beta^e_0\max_{s,i\in e}\left(\f{s}-\f{i}\right)^2
+\sum_{j\in e}\beta^e_{j}\left[\left(\max_{s\in e}\f{s}-\f{j}\right)^2
+(\f{j}-\min_{i\in e}\f{i})^2\right]\right\}
\end{split}
\nonumber
\end{align}
as required.
\end{proofof}

\section{Improved Upper Bound for Hypergraph Cheeger Inequality}
\label{sec:up_bound_cheeger}

Recall that we denote $\Q^0(f):=\sum_{e\in E}\w{e}
\max_{s,i\in e}\left(\f{s}-\f{i}\right)^2$ for the special case in (\refeq{eq:quad}) when
$\beta_0^e=1$ for all $e\in E$, and $\gamma^0_2 := \min_{\zero \neq f \perp_w \one} \frac{\Q^0(f)}{\|f\|_w^2}$.

\begin{theorem}[Upper Bound for Hypergraph Cheeger Inequalities]
    \label{thm:hyper-cheeger}
    Given an edge-weighted hypergraph $H$, we have:
    \[ \phi_H \leq 
    \min_{\zero \not=f \perp_w \one} \sqrt{\frac{2\Q^0(f)}{\|f\|_w^2}}
    =\sqrt{2\gamma^0_2},\]
    where $\phi_H$ is the hypergraph conductance defined in~(\ref{eq:hyper_exp}).
\end{theorem}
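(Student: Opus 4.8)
The plan is a Cheeger-style rounding (``sweep-cut'') argument applied to a minimizer of the discrepancy ratio $\Q^0(\cdot)/\norm{\cdot}_w^2$. Fix $f\perp_w\one$, $f\neq\zero$, attaining $\gamma^0_2$. \textbf{Step 1: reduce to a non-negative vector supported on at most half the weight.} Let $c$ be a weighted median of $(f_v)_{v\in V}$, set $g:=f-c\one$, and split $g=g_+-g_-$ into its coordinatewise positive and negative parts $g_+:=\max(g,\zero)$, $g_-:=\max(-g,\zero)$. Since $\Q^0$ depends only on coordinate differences, $\Q^0(g)=\Q^0(f)$; since $f\perp_w\one$, the weighted mean of $f$ vanishes, so $\norm{g}_w^2\ge\norm{f}_w^2$. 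An edge-by-edge inspection --- on each edge $e$ either $g$ has a single sign on $e$, in which case one part equals $g$ on $e$ and the other vanishes there, or $g$ is mixed on $e$, in which case $\min_e g\le 0\le\max_e g$ gives $(\max_e g)^2+(\min_e g)^2\le(\max_e g-\min_e g)^2$ --- yields $\Q^0(g_+)+\Q^0(g_-)\le\Q^0(g)$, while disjoint supports give $\norm{g_+}_w^2+\norm{g_-}_w^2=\norm{g}_w^2$. By the mediant inequality one of $g_+,g_-$, call it $h$, is non-zero, has $w(\supp h)\le\tfrac12 w(V)$ by the median property, and satisfies $\Q^0(h)/\norm{h}_w^2\le\gamma^0_2$.

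\textbf{Step 2: the randomized threshold cut.} Rescale so that $\max_v h_v=1$ (this changes neither the ratio nor $\supp h$), and draw a threshold $\theta\in[0,1]$ from the \emph{quadratic} sweep distribution $\Pr[\theta\le t]=t^2$; set $S_\theta:=\{v:h_v>\theta\}$. Almost surely $\theta<1$, so $S_\theta\neq\emptyset$ and $S_\theta\subseteq\supp h$, hence $w(S_\theta)\le\tfrac12 w(V)$. Direct computation gives $\E[w(S_\theta)]=\sum_v w_v h_v^2=\norm{h}_w^2$, and, writing $h^{\max}_e:=\max_{v\in e}h_v$ and $h^{\min}_e:=\min_{v\in e}h_v$, an edge $e$ is cut exactly when $h^{\min}_e\le\theta<h^{\max}_e$, so $\E[w(\partial S_\theta)]=\sum_e w_e\big((h^{\max}_e)^2-(h^{\min}_e)^2\big)$.

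\textbf{Step 3: bound and conclude.} Factor $(h^{\max}_e)^2-(h^{\min}_e)^2=(h^{\max}_e-h^{\min}_e)(h^{\max}_e+h^{\min}_e)$ and apply Cauchy--Schwarz over the edges: the difference factor contributes $\sqrt{\Q^0(h)}$, and the sum factor obeys $\sum_e w_e(h^{\max}_e+h^{\min}_e)^2\le 2\sum_e w_e\big((h^{\max}_e)^2+(h^{\min}_e)^2\big)\le 2\sum_e w_e\sum_{v\in e}h_v^2=2\norm{h}_w^2$, using that on a non-singleton edge the extreme densities occur at distinct vertices (singleton edges contribute nothing anywhere and can be dropped). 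Thus $\E[w(\partial S_\theta)]\le\sqrt{2\Q^0(h)}\,\norm{h}_w=\sqrt{2\Q^0(h)/\norm{h}_w^2}\cdot\E[w(S_\theta)]\le\sqrt{2\gamma^0_2}\cdot\E[w(S_\theta)]$, so $\E[\,w(\partial S_\theta)-\sqrt{2\gamma^0_2}\,w(S_\theta)\,]\le 0$. Hence some realization $S=S_{\theta_0}$ (with $\theta_0<1$, so $w(S)>0$) satisfies $\phi(S)=w(\partial S)/w(S)\le\sqrt{2\gamma^0_2}$; and since $w(V\setminus S)\ge\tfrac12 w(V)\ge w(S)$ we also get $\phi(V\setminus S)\le\phi(S)\le\sqrt{2\gamma^0_2}$, whence $\phi_H\le\sqrt{2\gamma^0_2}$. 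The stated equality is just pulling the constant out of the minimum over $f$.

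The routine parts are the two expectation computations and the mediant step. The delicate point --- and the source of the constant $\sqrt2$ rather than a weaker $2$ --- is pairing the quadratic sweep distribution with the Cauchy--Schwarz estimate and the edgewise inequality $(h^{\max}_e+h^{\min}_e)^2\le 2\sum_{v\in e}h_v^2$; a crude union bound over thresholds, or the linear sweep $\Pr[\theta\le t]=t$, would lose this factor. A minor nuisance is the bookkeeping for edges whose extreme densities are attained by several vertices and for singleton edges, but these cases affect none of the inequalities.
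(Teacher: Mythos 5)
Your proof is correct and follows essentially the same route as the paper's: the weighted-median shift, positive/negative-part splitting and mediant step are exactly Proposition~\ref{prop:find_hyper_general_cut}, and your quadratically distributed threshold $\theta$ applied to $h$ is identical to the paper's uniform $t$ applied to $h^2$ in Proposition~\ref{prop:find_hyper_cut}, followed by the same Cauchy--Schwarz estimate. The only cosmetic difference is that you explicitly discard singleton edges before the step $(h^{\max}_e+h^{\min}_e)^2\le 2\sum_{v\in e}h_v^2$, a point the paper leaves implicit.
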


\begin{proposition}
    \label{prop:find_hyper_cut}
    Given an edge-weighted hypergraph $H = (V,E,w)$
    and a non-zero vector $f \in \R^{V}$ such that $f \perp_w \one$,
    there exists a set $S\subseteq \supp(f)$ such that
    $$\phi(S) \leq
    \sqrt{\frac{2\Q^0(f)}{\|f\|_w^2}}.$$
\end{proposition}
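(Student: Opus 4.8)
The plan is to prove this via the standard "threshold rounding" (sweep cut) argument adapted to the quadratic form $\Q^0$. First I would reduce to the case where $f$ has a nonnegative part carrying at least half the weighted mass: since $f \perp_w \one$, both the positive support $V_+ := \{v : f_v > 0\}$ and the negative support $V_- := \{v : f_v < 0\}$ are nonempty, and after possibly replacing $f$ by $-f$ (which leaves $\Q^0$ and $\|f\|_w^2$ unchanged) I may assume $w(V_+) \le \tfrac12 w(V)$, so that every sweep set drawn from $V_+$ is automatically the "small side." Then I would truncate: define $g := \max(f, 0)$ (coordinatewise), supported on $V_+ \subseteq \supp(f)$. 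The key two inequalities to establish are (i) $\Q^0(g) \le \Q^0(f)$ and (ii) $\|g\|_w^2 \le \|f\|_w^2$; the latter is immediate, and the former holds edge-by-edge because for any edge $e$, $\max_{s,i \in e}(g_s - g_i) \le \max_{s,i \in e}(f_s - f_i)$ — truncating at zero only shrinks the spread (one checks the three cases according to whether $\max_{s\in e} f_s$ and $\min_{i \in e} f_i$ are positive, negative, or straddle zero). Hence $\frac{\Q^0(g)}{\|g\|_w^2} \le \frac{\Q^0(f)}{\|f\|_w^2}$, so it suffices to find the cut using $g \ge 0$ with support inside $V_+$.

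Next comes the probabilistic sweep. Normalize so that $\max_v g_v = 1$ (rescaling changes neither ratio), pick $t$ uniformly at random in $[0,1]$, and set $S_t := \{v : g_v^2 > t\} = \{v : g_v > \sqrt t\}$. Each such $S_t$ is contained in $\supp(g) \subseteq V_+ \subseteq \supp(f)$. I would compute two expectations. For the denominator, $\E[w(S_t)] = \sum_v w_v \Pr[g_v^2 > t] = \sum_v w_v g_v^2 = \|g\|_w^2$. For the numerator, $\E[w(\partial S_t)] = \sum_{e \in E} w_e \Pr[e \in \partial S_t]$; an edge $e$ is cut precisely when $t$ lies strictly between $(\min_{i \in e} g_i)^2$ and $(\max_{s \in e} g_s)^2$, so $\Pr[e \in \partial S_t] = (\max_{s\in e} g_s)^2 - (\min_{i \in e} g_i)^2 = (\max_s g_s - \min_i g_i)(\max_s g_s + \min_i g_i)$. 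By Cauchy–Schwarz, this is at most $\sqrt{(\max_s g_s - \min_i g_i)^2} \cdot \sqrt{2\big((\max_s g_s)^2 + (\min_i g_i)^2\big)}$, which is crude; the cleaner route is $(\max_s g_s + \min_i g_i)^2 \le 2\big((\max_s g_s)^2+(\min_i g_i)^2\big) \le 2\,(\text{something})$ — actually the right estimate is $(\max_s g_s+\min_i g_i) \le 2\max_s g_s$ combined differently. I would instead bound $\E[w(\partial S_t)] \le \sqrt{\sum_e w_e (\max_s g_s - \min_i g_i)^2} \cdot \sqrt{\sum_e w_e (\max_s g_s + \min_i g_i)^2}$ via Cauchy–Schwarz over edges, note the first factor is $\sqrt{\Q^0(g)}$, and bound the second factor by $\sqrt{2 \sum_e w_e((\max_s g_s)^2 + (\min_i g_i)^2)} \le \sqrt{2 \sum_e w_e \cdot 2 \max_{s,i}(g_s)^2}$... — this needs care, and the honest bound uses $(\max_s g_s+\min_i g_i)^2 \le 2(\max_s g_s - \min_i g_i)^2 + 8 (\min_i g_i)^2$ or similar, summing the $(\min_i g_i)^2$ term against $\sum_v w_v g_v^2$.

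The cleanest packaging, which I would actually write, is: $\E[w(\partial S_t)] = \sum_e w_e\big((\max_s g_s)^2 - (\min_i g_i)^2\big)$, and then apply Cauchy–Schwarz as $\sum_e w_e (a_e^2 - b_e^2) = \sum_e \sqrt{w_e}(a_e-b_e)\cdot\sqrt{w_e}(a_e+b_e) \le \sqrt{\sum_e w_e(a_e-b_e)^2}\sqrt{\sum_e w_e(a_e+b_e)^2}$ with $a_e = \max_s g_s, b_e = \min_i g_i$; the first radical is $\sqrt{\Q^0(g)}$, and for the second, $(a_e+b_e)^2 \le 2(a_e^2+b_e^2) \le 2(a_e^2 + b_e^2 + 2(a_e-b_e)^2)$... no — the standard trick is $a_e + b_e \le 2 a_e$ is false sign-wise but here $g \ge 0$ so $0 \le b_e \le a_e$, giving $(a_e+b_e)^2 \le 4 a_e^2$, hence $\sum_e w_e (a_e+b_e)^2 \le 4\sum_e w_e (\max_s g_s)^2 \le 4 \sum_e w_e \sum_{s \in e} g_s^2 = 4\sum_v g_v^2 \sum_{e \ni v} w_e = 4 \|g\|_w^2$. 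Therefore $\E[w(\partial S_t)] \le 2\sqrt{\Q^0(g)}\,\|g\|_w$, while $\E[w(S_t)] = \|g\|_w^2$, so by an averaging argument there exists $t$ with $\frac{w(\partial S_t)}{w(S_t)} \le \frac{\E[w(\partial S_t)]}{\E[w(S_t)]} \le \frac{2\sqrt{\Q^0(g)}}{\|g\|_w} = \sqrt{\frac{4\Q^0(g)}{\|g\|_w^2}}$. This gives $\phi(S) \le \sqrt{4\Q^0(f)/\|f\|_w^2}$, a factor of $\sqrt 2$ worse than claimed; to recover the $\sqrt 2$ I would sharpen the last step by not using $(a_e+b_e)^2 \le 4a_e^2$ uniformly but splitting $\sum_e w_e(a_e^2 + b_e^2)$ and charging the $b_e^2$ terms to $\|g\|_w^2$ as well, noting $\sum_e w_e((\max_s g_s)^2 + (\min_i g_i)^2) \le \sum_e w_e \sum_{s\in e}g_s^2 + \sum_e w_e\sum_{i \in e}g_i^2$ double-counts, and more carefully $\le 2\|g\|_w^2$, which yields second factor $\le \sqrt{2}\,\|g\|_w$ and final bound $\sqrt{2\Q^0(f)/\|f\|_w^2}$ exactly. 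The main obstacle is precisely this constant-chasing in the Cauchy–Schwarz step: getting the tight factor of $2$ rather than $4$ inside the square root, which forces one to avoid the lossy bound $\min_i g_i \le \max_s g_s$ and instead keep both $(\max)^2$ and $(\min)^2$ terms and bound each against $\sum_{v \in e} w_e g_v^2$ before summing.
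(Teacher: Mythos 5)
Your sweep-cut core (random threshold on the squared values, Cauchy--Schwarz over edges, charging $(\max_{s\in e} g_s)^2+(\min_{i\in e} g_i)^2$ to $\sum_{v\in e} g_v^2$ to get the constant $2$) is essentially the paper's argument and, once the false starts are stripped out, that part is fine. The genuine gap is in your preliminary reduction. You truncate $f$ to its positive part $g:=\max(f,0)$, establish $\Q^0(g)\le \Q^0(f)$ and $\|g\|_w^2\le\|f\|_w^2$, and then write ``hence $\Q^0(g)/\|g\|_w^2\le \Q^0(f)/\|f\|_w^2$.'' That inference is invalid: decreasing both the numerator and the denominator says nothing about the ratio. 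Concretely, take two disjoint weight-one edges $e_1=\{a,b\}$, $e_2=\{c,d\}$ with $f=(-2,-2,1,3)$; then $f\perp_w\one$, $\Q^0(f)/\|f\|_w^2=4/18$, but $g=(0,0,1,3)$ gives $\Q^0(g)/\|g\|_w^2=4/10$, which is larger, and your sweep on $g$ only certifies $\phi(S)\le\sqrt{8/10}$, weaker than the required $\sqrt{8/18}$. (Your choice of which side to keep is made by weight, not by which side has the smaller ratio, so you cannot assume the positive part is the good one.)

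Two repairs are available. The one the paper uses for this proposition is to skip truncation entirely: it sets $S_t:=\{u: f_u^2\ge t\}$ for $t$ uniform in $(0,1]$ (so $S_t$ may contain both strongly positive and strongly negative vertices, which is allowed here since the statement only demands $S\subseteq\supp(f)$), computes $\Pr[e\in\partial S_t]=\max_{s,i\in e}(|f_s|^2-|f_i|^2)$, and uses $\max_{s,i\in e}(|f_s|-|f_i|)\le\max_{s,i\in e}(f_s-f_i)$ before applying your Cauchy--Schwarz step verbatim. The alternative repair, if you insist on one-sided cuts, is to keep \emph{both} truncations $g_+$ and $g_-$, prove the superadditivity $\Q^0(f)\ge \Q^0(g_+)+\Q^0(g_-)$ (this needs the case analysis on whether an edge straddles zero) together with $\|f\|_w^2=\|g_+\|_w^2+\|g_-\|_w^2$, and conclude via the mediant inequality that at least one of the two parts has discrepancy ratio at most that of $f$; this is exactly what the paper does in its Proposition~\ref{prop:find_hyper_general_cut}, where one-sidedness is actually needed to control $w(S)\le w(V)/2$ for the theorem. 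As written, your proof does not establish the stated bound.
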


\begin{proof}
    Without loss of generality,
    suppose $-1\leq f_u \leq 1$ for all $u\in V$
    since we can scale $f$ if not.
    Let $t$ be a random variable that
    is uniformly distributed in $(0,1]$.
    Define $S_t:=\{u\in V: f_u^2\geq t\}$.
    Then $S_t\subseteq \supp(f)$ by definition.
    We consider the expected value of $w(S_t)$ and $w(\partial S_t)$.
    \begin{align}
    \begin{split}
    \E[w(S_t)]=\sum_{u\in V}w_u\Pr[u\in S_t]
    =\sum_{u\in V}w_u\Pr[f_u^2\geq t]
    =\sum_{u\in V}w_uf_u^2.
    \end{split}
    \nonumber
    \end{align}
    \begin{align}
    \begin{split}
    &\E\left[w(\partial S_t)\right]
    =\sum_{e\in E}w_e\Pr\left[e\in \partial S_t\right]
    =\sum_{e\in E}w_e\Pr\left[
    \min_{i\in e}f_i^2<t\leq\max_{s\in e}f_s^2
    \right]
    \\
    =&\sum_{e\in E}w_e\left(\max_{s\in e}f_s^2-\min_{i\in e}f_i^2\right)
    =\sum_{e\in E}w_e\max_{s,i\in e}\left(f_s^2-f_i^2\right)
    =\sum_{e\in E}w_e\max_{s,i\in e}\left(\Abs{f_s}^2-\Abs{f_i}^2\right)
    \\
    =&\sum_{e\in E}w_e\max_{s,i\in e}
    \left(\Abs{f_s}-\Abs{f_i}\right)\left(\Abs{f_s}+\Abs{f_i}\right)
    \leq\sum_{e\in E}w_e\max_{s,i\in e}\left(\Abs{f_s}-\Abs{f_i}\right)
    \max_{u,v\in e}\left(\Abs{f_u}+\Abs{f_v}\right)
    \\
    \leq&\sum_{e\in E}w_e\max_{s,i\in e}\left(f_s-f_i\right) \cdot
    \max_{u,v\in e}\left(\Abs{f_u}+\Abs{f_v}\right)
    \\
    &\mbox{(By Cauchy--Schwarz inequality)}
    \\
    \leq&
    \sqrt{\sum_{e\in E}w_e
    \max_{s,i\in e}\left(f_s-f_i\right)^2} \cdot
    \sqrt{\sum_{e\in E}w_e
    \max_{u,v\in e}\left(\Abs{f_u}+\Abs{f_v}\right)^2}
    \\
    \leq&\sqrt{\Q^0(f)} \cdot \sqrt{\sum_{e\in E}w_e
    \max_{u,v\in e}2\left(f_u^2+f_v^2\right)} 
    \leq\sqrt{\Q^0(f)} \cdot \sqrt{\sum_{e\in E}w_e \cdot
    2\sum_{u\in e}f_u^2}
    \\
    =&\sqrt{\Q^0(f)} \cdot \sqrt{2\sum_{u\in V}w_uf_u^2}
    =\sqrt{\frac{2\Q^0(f)}{\|f\|_w^2}}
    \cdot \sum_{u\in V}w_uf_u^2.
    \end{split}
    \nonumber
    \end{align}
    Thus,
    $\frac{\E\left[w(\partial S_t)\right]}{\E\left[w(S_t)\right]}
    \leq \sqrt{\frac{2\Q^0(f)}{\|f\|_w^2}}$,
    which means
    $\E\left[w(\partial S_t)
    -\sqrt{\frac{2\Q^0(f)}{\|f\|_w^2}}\cdot w(S_t)\right]
    \leq 0$.
    
    Therefore, there exists a $t$ such that
    the induced $S_t$ satisfies 
    $\phi(S_t)=
    \frac{w(\partial S_t)}{w(S_t)}
    \leq \sqrt{\frac{2\Q^0(f)}{\|f\|_w^2}}$.
\qed
\end{proof}

For any $g \in \R^{V}$,
define $g_+,g_- \in \R^{V}$ such that
$g_+(v)=\max\{0,g(v)\}$ and $g_-(v)=\min\{0,g(v)\}$ for $v\in V$.

\begin{proposition}
    \label{prop:find_hyper_general_cut}
    Given an edge-weighted hypergraph $H = (V,E,w)$ and a non-zero vector $f \in \R^{V}$ such that $f \perp_w \one$, 
    let $g=f-c\one$  where $c$ is a constant such that
    $w(\supp(g_+))\leq \frac{w(V)}{2}$
    and $w(\supp(g_-))\leq \frac{w(V)}{2}$.
    Then we have the following:
    \begin{compactitem}
        \item[1.] $\|g\|_w^2\geq \|f\|_w^2$,
        \item[2.] $\Q^0(g) = \Q^0(f)$.
        \item[3.] $\min\left\{
        \frac{\Q^0(g_+)}{\|g_+\|_w^2},
        \frac{\Q^0(g_-)}{\|g_-\|_w^2}
        \right\}
        \leq \frac{\Q^0(g)}{\|g\|_w^2}
        \leq \frac{\Q^0(f)}{\|f\|_w^2}$.
    \end{compactitem}
\end{proposition}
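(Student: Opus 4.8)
The plan is to dispatch the three claims in order, the first two by direct computation and the third by combining an edge-by-edge inequality with the mediant inequality. For Part~1, I would expand $\|g\|_w^2 = \sum_{u\in V} w_u (f_u - c)^2 = \|f\|_w^2 - 2c\,\langle f,\one\rangle_w + c^2\, w(V)$ and use the hypothesis $f \perp_w \one$, i.e.\ $\langle f,\one\rangle_w = 0$, to get $\|g\|_w^2 = \|f\|_w^2 + c^2 w(V) \geq \|f\|_w^2$. For Part~2, I would note that $\Q^0$ depends on $f$ only through the within-edge differences $f_s - f_i$, which are unchanged by the shift $g = f - c\one$; hence $\Q^0(g) = \Q^0(f)$ edge by edge. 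The right-hand inequality in Part~3 then follows at once: $\frac{\Q^0(g)}{\|g\|_w^2} = \frac{\Q^0(f)}{\|g\|_w^2} \leq \frac{\Q^0(f)}{\|f\|_w^2}$ by Part~1.

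For the left-hand inequality of Part~3, observe that $g = g_+ + g_-$ with $g_+$ and $g_-$ supported on disjoint sets, so $g_u^2 = (g_+)_u^2 + (g_-)_u^2$ for every $u$ and hence $\|g\|_w^2 = \|g_+\|_w^2 + \|g_-\|_w^2$. The crux is the claim $\Q^0(g) \geq \Q^0(g_+) + \Q^0(g_-)$, which I would prove edge by edge: fix $e$, put $M := \max_{s\in e} g_s$ and $m := \min_{i\in e} g_i$, so that $e$ contributes $w_e (M-m)^2$ to $\Q^0(g)$. If $0 \leq m \leq M$, then $e$ contributes $w_e(M-m)^2$ to $\Q^0(g_+)$ and $0$ to $\Q^0(g_-)$; the case $m \leq M \leq 0$ is symmetric; and if $m < 0 < M$, the two contributions are $w_e M^2$ and $w_e m^2$, whose sum is at most $w_e(M-m)^2 = w_e(M^2 + m^2 - 2Mm)$ since $Mm \leq 0$. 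Summing over $e$ gives the claim, and then the mediant inequality $\frac{a+b}{c+d} \geq \min\{\frac{a}{c},\frac{b}{d}\}$ (for $c,d>0$) yields $\frac{\Q^0(g)}{\|g\|_w^2} \geq \frac{\Q^0(g_+)+\Q^0(g_-)}{\|g_+\|_w^2+\|g_-\|_w^2} \geq \min\{\frac{\Q^0(g_+)}{\|g_+\|_w^2},\frac{\Q^0(g_-)}{\|g_-\|_w^2}\}$, as required. In the degenerate case where one of $g_+, g_-$ is the zero vector, the other equals $g$ and the bound holds trivially, that term being dropped from the minimum.

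I do not expect a real obstacle here; the only care needed is the three-way sign case analysis in the per-edge inequality and the convention when $g_+$ or $g_-$ vanishes. I would also point out that the balance conditions $w(\supp(g_+)) \leq \frac{w(V)}{2}$ and $w(\supp(g_-)) \leq \frac{w(V)}{2}$ play no role in establishing the three displayed inequalities; they are needed only afterwards, so that applying Proposition~\ref{prop:find_hyper_cut} to the better of $g_+, g_-$ produces a cut $S \subseteq \supp(f)$ for which $\phi(V \setminus S)$ is also under control. Hence I would keep them out of the present argument.
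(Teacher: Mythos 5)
Your proof is correct and follows essentially the same route as the paper's: shift-invariance of $\Q^0$ for Part~2, the decomposition $\|g\|_w^2=\|g_+\|_w^2+\|g_-\|_w^2$ together with the per-edge inequality $\Q^0(g)\geq \Q^0(g_+)+\Q^0(g_-)$ and the mediant inequality for Part~3. The only cosmetic difference is in Part~1, where you expand $\|g\|_w^2=\|f\|_w^2+c^2 w(V)$ directly instead of minimizing $h(c)=\|f-c\one\|_w^2$ over $c$ as the paper does; your observation that the balance conditions on $\supp(g_\pm)$ are only needed downstream in Theorem~\ref{thm:hyper-cheeger} is also accurate.
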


\begin{proof}
    For the first statement, let
    $h(c)=\|g\|_w^2=\sum_{u\in V}w_ug(u)^2=\sum_{u\in V}w_u(f_u-c)^2$.
    Then we have $h'(c)=\sum_{u \in V}(-2w_uf_u+2cw_u)=\sum_{u \in V}2cw_u$,
    since $f \perp_w \one$.
    We also have $h''(c)=\sum_{u \in V}2w_u>0$.
    Thus, $h(c)$ is minimized when $h'(c)=0$, i.e. $c=0$.
    Then $\|g\|_w^2=h(c)\geq h(0)
    =\sum_{u\in V}w_uf_u^2=\|f\|_w^2$.
    
    For the second statement, we have
    
    $\Q^0(g)
    =\sum_{e\in E}w_e\max_{u,v\in e}[g(u)-g(v)]^2
    =\sum_{e\in E}w_e\max_{u,v\in e}[f(u)-f(v)]^2
    =\Q^0(f).$
    
    For the third statement, notice that
    $\|g\|_w^2=\sum_{u\in V}w_ug(u)^2
    =\sum_{u\in V}w_ug_+(u)^2+\sum_{u\in V}w_ug_-(u)^2
    =\|g_+\|_w^2+\|g_-\|_w^2
    $.
    
    We claim that $\Q^0(g)\geq \Q^0(g_+)+\Q^0(g_-)$ since
    for all $e\in E$,
    
    $\max_{u,v\in e}[g(u)-g(v)]^2
    \geq \max_{u,v\in e}[g_+(u)-g_+(v)]^2
    +\max_{u,v\in e}[g_-(u)-g_-(v)]^2$
    
    \noindent by considering the following two cases:
    \begin{compactitem}
        \item[1.] $g(u)$ and $g(u)$ have the same sign.
        Then it must be either $\forall u\in e, g(u)\geq 0$
        or $\forall u\in e, g(u)\leq 0$.
        Thus, one of $\max_{u,v\in e}[g_+(u)-g_+(v)]^2$
        and $\max_{u,v\in e}[g_-(u)-g_-(v)]^2$ equals zero.
        Then $\max_{u,v\in e}[g(u)-g(v)]^2
        = \max_{u,v\in e}[g_+(u)-g_+(v)]^2
        +\max_{u,v\in e}[g_-(u)-g_-(v)]^2$.
        
        \item[2.] $g(u)$ and $g(u)$ have the opposite signs.
        With out loss of generality, we assume $g(u)>g(v)$.
        Then we have $g_+(u)>0$, $g_-(v)<0$
        and $g_+(v)=g_-(u)=0$.
        Thus, 
        \begin{align}
        \begin{split}
        \max_{u,v\in e}[g(u)-g(v)]^2
        &=\max_{u,v\in e}[g(u)^2-2g(u)g(v)+g(v)^2]
        \\
        &\geq \max_{u\in e:g(u)\geq 0}g(u)^2
        +\max_{v\in e:g(v)\leq 0}g(v)^2
        \\
        &=\max_{u\in e}g_+(u)^2
        +\max_{v\in e}g_-(v)^2
        \\
        &=\max_{u,v\in e}[g_+(u)-g_+(v)]^2
        +\max_{u,v\in e}[g_-(u)-g_-(v)]^2.
        \end{split}
        \nonumber
        \end{align}
    \end{compactitem}
Then by the first and second statements, we have
$$\frac{\Q^0(f)}{\|f\|_w^2}
\geq \frac{\Q^0(g)}{\|g\|_w^2}
\geq \frac{\Q^0(g_+)+\Q^0(g_-)}{
\|g_+\|_w^2+\|g_-\|_w^2}
\geq \min\left\{
\frac{\Q^0(g_+)}{\|g_+\|_w^2},
\frac{\Q^0(g_-)}{\|g_-\|_w^2}
\right\}.
$$
\qed
\end{proof}

Now we can prove Theorem~\ref{thm:hyper-cheeger}.

\begin{proofof}{Theorem~\ref{thm:hyper-cheeger}}~
    By Proposition~\ref{prop:find_hyper_cut},
    there exist $S_{t+}\subseteq \supp(g_+)$
    and $S_{t-}\subseteq \supp(g_-)$
    such that for any non-zero vector
    $f\in \R^V$ satisfying $f \perp_w \one$:
    \begin{align}
    \begin{split}
    \min\left\{\phi(S_{t+}),\phi(S_{t-})\right\}
    \leq\min\left\{
    \sqrt{\frac{2\Q^0(g_+)}{\|g_+\|_w^2}},
    \sqrt{\frac{2\Q^0(g_-)}{\|g_-\|_w^2}}
    \right\}
    \leq \sqrt{\frac{2\Q^0(f)}{\|f\|_w^2}}
    \end{split}
    \nonumber
    \end{align}
    where the last inequality follows from the third statement of Proposition~\ref{prop:find_hyper_general_cut}.
    Moreover, $w(S_{t+})\leq\frac{w(V)}{2}$ and $w(S_{t-})\leq\frac{w(V)}{2}$
    by Proposition~\ref{prop:find_hyper_general_cut}.
    Thus, for any non-zero vector
    $f\in \R^V$ satisfying $f \perp_w \one$:
    $$
    \phi_H
    \leq \min\left\{\phi(S_{t+}),\phi(S_{t-})\right\}
    \leq \sqrt{\frac{2\Q^0(f)}{\|f\|_w^2}}.
    $$
    Then we have $\phi_H \leq
    \min_{\zero\not=f \perp_w \one}
    \sqrt{\frac{2\Q^0(f)}{\|f\|_w^2}}=\sqrt{2\gamma^0_2}$.
\end{proofof}

\end{document}